\documentclass[12pt]{article}
\usepackage[utf8]{inputenc}
\usepackage[english]{babel}
\usepackage{hyperref}
\usepackage{textcomp}
\usepackage[autostyle]{csquotes}

\usepackage[backend=biber,bibstyle=numeric, sorting=ynt,maxbibnames=99]{biblatex}
\addbibresource{Mismatch-Scipost.bib}

\usepackage{amsthm}
\usepackage{amsfonts}
\usepackage{amssymb}
\usepackage{graphicx}
\usepackage{comment}
\usepackage{geometry}
\geometry{top=4.5cm,left=3cm,right=3cm,bottom=3.5cm,heightrounded,bindingoffset=5mm}
\usepackage{subfiles}
\usepackage{subfig}
\usepackage{mathtools}
\usepackage{pgfplots}
\pgfplotsset{/pgf/number format/use comma,compat=newest}
\usepackage{url}
\usepackage{tikz}
\usepackage{bbm}
\usepackage{floatflt}
\usetikzlibrary{arrows,decorations.pathmorphing,backgrounds,positioning,fit,petri}

\usepgfplotslibrary{fillbetween}

\newcommand{\E}{\mathbb{E}}
\newcommand{\R}{\mathbb{R}}
\newcommand{\N}{\mathbb{N}}

\newcommand\iid{\mathrel{\stackrel{\makebox[0pt]{\mbox{\normalfont\tiny iid}}}{\sim}}}

\newcommand\distreq{\mathrel{\stackrel{\makebox[0pt]{\mbox{\normalfont\tiny D}}}{=}}}

\newcommand\Lconv{\mathrel{\stackrel{\makebox[0pt]{\mbox{\normalfont\tiny $L^2$}}}{\longrightarrow}}}

\theoremstyle{plain} 
\newtheorem{theorem}{Theorem} 

\newtheorem{corollary}[theorem]{Corollary} 
\newtheorem{lemma}[theorem]{Lemma} 
\newtheorem{proposition}[theorem]{Proposition} 

\theoremstyle{definition}

\theoremstyle{remark} 
\newtheorem{remark}{Remark}

\title{{ An inference problem in a mismatched setting:\\
a spin-glass model with Mattis interaction}}
\author{Francesco Camilli \and Pierluigi Contucci \and Emanuele Mingione}

\begin{document}

\maketitle

\begin{abstract}
    The Wigner spiked model in a mismatched setting is studied with the finite temperature Statistical Mechanics approach through its representation as a Sherrington-Kirkpatrick model with added Mattis interaction. The exact solution of the model with Ising spins is rigorously proved to be given by a variational principle on two order parameters, the Parisi overlap distribution and the Mattis magnetization. The latter is identified by an ordinary variational principle and turns out to concentrate in the thermodynamic limit. The solution leads to the computation of the Mean Square Error of the mismatched reconstruction. The Gaussian signal distribution case is investigated and the corresponding phase diagram is identified.
\end{abstract}

\section{Introduction}
The fruitful interplay between disordered Statistical Mechanics and high dimensional inference has a classical example in the well known equivalence between the Sherrington-Kirkpatrick model on the Nishimori line \cite{nishimori01}
and the Wigner spiked model (see \cite{Florent} and references therein) with Rademacher prior. The first is the prototype of mean-field disordered systems with a special choice for the spin interaction distribution, whereas the second amounts to the problem of reconstructing a binary signal sent through a noisy Gaussian channel in the optimal setting when the receiver knows the distribution of the signal and the noise. The correspondence is based on the fact that the Shannon entropy of the observations that the receiver uses to retrieve the signal coincides, up to simple additive terms, with the free energy of the mentioned Statistical Mechanics model. 
The proof of such correspondence relies on Bayes rule and the gauge invariance property of such systems. From those one can also show that all these models fulfill a set of identities and correlation inequalities \cite{contucci_morita_nishimori,Nishi_id_PC,contucci_giardina_2012} that imply the peculiar feature of the emerging thermodynamics, known as \textit{replica symmetry}, \emph{i.e.} the system properties are fully encoded in a self-averaging quantity, the overlap
\cite{barbierCONC2,overlap_jean,jean-dmitry}. 

The more general setting instead, referred to as \emph{mismatched}, in which the receiver has only a guess of the signal distribution and/or does not know the strength of the noise is a new and rapidly growing research field \cite{Verdu,barbier2020strong, barbier2021performance,mukka2021,macrismismatched}.

In this paper we work with a fully mismatched Wigner spiked model, where the receiver has no \emph{apriori} knowledge of the signal and tries to reconstruct it only through Ising spins. 
Instead of using a max-likelihood approach to estimate the signal, which would correspond to the search of the ground state of a given Hamiltonian, we choose a typical configuration of the system at finite temperature, or equivalently we adopt the receiver's posterior mean as the estimator. 
The emerging Statistical Mechanics model turns out to be the sum of an SK with a two-body mean-field Mattis interaction.

Our main result is the rigorous exact solution of such model described by the two natural order parameters represented by the overlap distribution and the Mattis magnetization. We show that, while the first obeys a functional variational principle of Parisi type, the second is obtained through a classical one dimensional optimization problem. The proof relies on the crucial property of self-averaging of the Mattis magnetization. When the signal distribution is Gaussian the phase space is investigated and a tricritical point is identified separating paramagnetic, glassy and ferromagnetic phases.

The paper is organized as follows. Section 2 contains the definitions and the main results from the Statistical Mechanics point of view. Section 3 briefly outlines the link between the inference problem and the mentioned model. Section 4 contains the mathematical proofs. Section 5 analyses in detail the phase diagram related to the case of Gaussian signal distribution. Finally, Section 6 collects conclusions and outlooks.

\section{Definitions and Main Results}
Consider a system of $N$ interacting Ising spins described by a Sherrington-Kirkpatrick Hamiltonian with external random \emph{iid} magnetic fields and a further two body interaction of Mattis type induced by the same magnetic fields. More specifically, to each site $i=1,\dots,N$ we associate a spin $\sigma_i\in\{+1,-1\}$. The state of the system will be completely identified by the vector $\boldsymbol{\sigma}=(\sigma_1,\dots,\sigma_N)\in\{+1,-1\}^N=:\Sigma_N$. Furthermore, we assume that the spins have a uniform prior distribution, namely $\mathbb{P}(\sigma_i=+1)=1/2$. The Hamiltonian of the model hereby studied is
\begin{align}\label{hamiltonian_WS_mismatched}
    H_N(\boldsymbol{\sigma};\mu,\nu,\lambda)\equiv H_N(\boldsymbol{\sigma}) =-\sum_{i,j=1}^N\left(z_{ij}\sqrt{\frac{\mu}{2N}}\sigma_i\sigma_j+\frac{\nu}{2N}\sigma_i\sigma_j\xi_i\xi_j\right)-\lambda \sum_{i=1}^N\xi_i\sigma_i\,,
\end{align}
with $\mu,\nu \geq 0,\,\lambda\in\mathbb{R}$, $z_{ij}\iid\mathcal{N}(0,1)$ and $\xi_i\iid
P_\xi$ independent of the $z_{ij}$'s, where $P_\xi$ is any distribution such that $\mathbb{E}[\xi_1^4]<\infty$. The $z_{ij}$'s and $\xi_i$'s play
the role of quenched disorder in this model. The model is going to be described by the couple of order parameters
\begin{align}\label{overlap_def}
    q_N(\boldsymbol{\sigma},\boldsymbol{\tau})=\frac{1}{N}\boldsymbol{\sigma}\cdot\boldsymbol{\tau}=\frac{1}{N}\sum_{i=1}^N\sigma_i\tau_i\,,\quad
    m_N(\boldsymbol{\sigma}|\boldsymbol{\xi})=\frac{1}{N}\sum_{i=1}^N\sigma_i\xi_i\,
\end{align} 
where $\boldsymbol{\sigma}\,,\boldsymbol{\tau}\in \Sigma_N$ and $\boldsymbol{\xi}=(\xi_1,\dots,\xi_N)$.
In what follows we will refer to $m_N(\boldsymbol{\sigma}|\boldsymbol{\xi})$ as Mattis magnetization. One can now separate the three contributions in the Hamiltonian \eqref{hamiltonian_WS_mismatched}, thus obtaining
\begin{align}
     H_N(\boldsymbol{\sigma})= -\sqrt{\mu}\sum_{i,j=1}^N\frac{z_{ij}}{\sqrt{2N}}\sigma_i\sigma_j-\frac{N\nu}{2}m_N^2(\boldsymbol{\sigma}|\boldsymbol{\xi})-N\lambda m_N(\boldsymbol{\sigma}|\boldsymbol{\xi})\,,
\end{align}where an SK-like term \begin{align}H_N^{SK}(\boldsymbol{\sigma}):=-\sum_{i,j=1}^N\frac{z_{ij}}{\sqrt{2N}}\sigma_i\sigma_j
\end{align}
at temperature $\sqrt{\mu}$ is clearly recognizable. The Boltzmann-Gibbs average will be denoted by
\begin{align}\label{BGmeasure}
    \langle\cdot\rangle_N=\frac{1}{Z_N}\sum_{\boldsymbol{\sigma}\in\Sigma_N}(\cdot)\exp\left[-H_N(\boldsymbol{\sigma})\right]\,,\quad Z_N=\sum_{\boldsymbol{\sigma}\in\Sigma_N}\exp\left[-H_N(\boldsymbol{\sigma})\right]\,.
\end{align} 
Due to the presence of the quenched disorder, Boltzmann-Gibbs averages are in general random quantities.

We define the random and quenched pressures of the model respectively as 
\begin{align}\label{random_qppp}
    &p_N(\mu,\nu,\lambda)=\frac{1}{N}\log\sum_{\boldsymbol{\sigma}\in\Sigma_N}\exp\left[
    -\sqrt{\mu}H_N^{SK}(\boldsymbol{\sigma})+\frac{N\nu}{2}m_N^2(\boldsymbol{\sigma}|\boldsymbol{\xi})+N\lambda m_N(\boldsymbol{\sigma}|\boldsymbol{\xi})
    \right]\\
    \label{quenched_pressure}
    &\bar{p}_N(\mu,\nu,\lambda)=\mathbb{E} p_N(\mu,\nu,\lambda)
\end{align} 
where the expectation in the latter is taken w.r.t. all the disorder: $\mathbb{E}\equiv\mathbb{E}_{\boldsymbol{\xi}}\mathbb{E}_\mathbf{Z}$.
For future convenience, we also introduce the quenched pressure of an SK model with random magnetic fields $\xi_i\iid P_\xi$ and its limit:{
\begin{align}\label{P_SK_randomfields}
    &\bar{p}_N^{SK}(\beta,h):=\frac{1}{N}\mathbb{E}\log\sum_{\boldsymbol{\sigma}\in\Sigma_N}\exp\left[-\beta
    H^{SK}_N(\boldsymbol{\sigma})+h\sum_{i=1}^N\xi_i\sigma_i
    \right]\,,\\
    \label{Parisi_func}&\mathcal{P}(\beta,h):=\inf_{\chi\in\mathcal{M}_{[0,1]}}\mathcal{P}(\chi;\beta,h)=\lim_{N\to\infty}\bar{p}_N^{SK}(\beta,h)\,.
\end{align}}
{where $\mathcal{M}_{[0,1]}$ is the space of distributions over $[0,1]$ and $\mathcal{P}(\chi;\beta,h)$ is the Parisi functional \cite{MPV,Guerra_upper_bound,Tala_vol2,panchenko2015sherrington} (see Sect. \ref{tools_section} for a synthetic description). } The last limit exists by a super-additivity argument \cite{interp_guerra_2002} and depends implicitly on the distribution $P_\xi$. The main result of this paper is the variational principle for the thermodynamic limit of \eqref{quenched_pressure}. 

\begin{theorem}[Variational solution]\label{main_thm} If $\mathbb{E}[\xi_1^4]<+\infty$ then
\begin{align}\label{variational_principle}
    p_N(\mu,\nu,\lambda)\Lconv\lim_{N\to\infty}\bar{p}_N(\mu,\nu,\lambda)=:p(\mu,\nu,\lambda)=\sup_{x\in\mathbb{R}}\varphi(x;\mu,\nu,\lambda)
\end{align}where
\begin{align}
    \label{RSBpotential}
    \varphi(x;\mu,\nu,\lambda):=-\frac{\nu x^2}{2}+\mathcal{P}(\sqrt{\mu},\nu x+\lambda)\,.
\end{align}
\end{theorem}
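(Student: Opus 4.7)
The plan is to combine a Gaussian Hubbard--Stratonovich (HS) transformation, which linearizes the quadratic Mattis term and reduces the model to an SK model with an effective random magnetic field, with a Laplace-type asymptotic analysis. The two halves of the variational identity $p=\sup_x\varphi$ then come from a convexity (completion-of-squares) trick for the lower bound and a Laplace saddle-point argument for the upper bound, both resting on the existence, convexity, and concentration of the SK quenched pressure $\bar p_N^{SK}$ and its limit $\mathcal P$ recalled in \eqref{P_SK_randomfields}--\eqref{Parisi_func}.

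For the lower bound I would use the elementary identity $\frac{\nu}{2}m_N^2=\sup_{x\in\R}\bigl(-\frac{\nu x^2}{2}+\nu x\,m_N\bigr)$, which yields, pointwise in the disorder,
$$Z_N\;\geq\; e^{-N\nu x^2/2}\,Z_N^{SK}(\sqrt\mu,\nu x+\lambda)\qquad\text{for every }x\in\R.$$
Taking $\frac{1}{N}\log\E$ and sending $N\to\infty$ via \eqref{Parisi_func} gives $\liminf_{N}\bar p_N(\mu,\nu,\lambda)\geq -\frac{\nu x^2}{2}+\mathcal P(\sqrt\mu,\nu x+\lambda)=\varphi(x;\mu,\nu,\lambda)$, and the supremum over $x$ closes this half.

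For the upper bound, the HS identity
$$\exp\!\Bigl(\tfrac{N\nu}{2} m_N^2\Bigr)=\sqrt{\tfrac{N\nu}{2\pi}}\int_{\R} e^{-\frac{N\nu x^2}{2}+N\nu x\,m_N}\,dx$$
(valid for $\nu>0$; the case $\nu=0$ is the SK model with random field, hence immediate) produces $Z_N=\sqrt{N\nu/2\pi}\int e^{-N\nu x^2/2}\,Z_N^{SK}(\sqrt\mu,\nu x+\lambda)\,dx$. I would split the integral into $|x|\leq R$ and $|x|>R$: in the tails the Gaussian weight, combined with the crude deterministic bound $\bar p_N^{SK}(\sqrt\mu,h)\leq \log 2+\mu/4+|h|\,\E|\xi_1|$ obtained by Jensen on the SK part, makes the contribution arbitrarily small as $R\to\infty$; on $[-R,R]$, convexity in $h$ of both $\bar p_N^{SK}(\sqrt\mu,h)$ and $\mathcal P(\sqrt\mu,h)$ upgrades pointwise convergence to uniform convergence on compacts, after which a standard Laplace argument yields $\limsup_N \bar p_N\leq \sup_{|x|\leq R}\varphi(x;\mu,\nu,\lambda)+o(1)$, matching the lower bound as $R\to\infty$.

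Finally, to upgrade convergence in mean into the $L^2$ convergence asserted in \eqref{variational_principle} I would prove $\E[(p_N-\bar p_N)^2]\to 0$ via Gaussian Poincar\'e for the couplings $z_{ij}$ and an Efron--Stein / bounded-differences argument for the $\boldsymbol\xi$, whose applicability under $\E[\xi_1^4]<\infty$ is exactly what the moment hypothesis buys. The main obstacle is a subtlety inside the Laplace step: one must justify replacing the random $\frac{1}{N}\log Z_N^{SK}(\sqrt\mu,\nu x+\lambda)$ under the HS integral by its quenched counterpart \emph{uniformly} in $x$, which I would do by combining fixed-$x$ concentration of $\frac{1}{N}\log Z_N^{SK}$ with the convexity-based uniformity mentioned above. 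This is precisely where the self-averaging of the Mattis magnetization flagged in the introduction enters: it ensures that the HS integral concentrates on a single value $x^{*}=\arg\max\varphi$, so that the reconstruction of the thermodynamic limit by a one-dimensional variational principle is indeed correct.
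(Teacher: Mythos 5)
Your lower bound is exactly the paper's: the paper's choice $r_\epsilon(t)\equiv x$ in the sum rule \eqref{sum_rule} and discarding the positive remainder $\Delta_\epsilon$ is precisely your completion-of-squares inequality $Z_N\geq e^{-N\nu x^2/2}Z_N^{SK}(\sqrt\mu,\nu x+\lambda)$, plus an inessential $\mathcal O(s_N)$ shift. The upper bound is where you genuinely diverge. The paper does \emph{not} use a Hubbard--Stratonovich linearization; it keeps the model in spin variables and runs the adaptive interpolation of \eqref{interpolating_hamiltonian}, choosing the interpolating path by the ODE \eqref{ODE_adaptive}, applying Jensen on the SK part, Guerra's bound $\bar p_N^{SK}\leq\mathcal P$, and then killing $\Delta_\epsilon(t)$ via the Mattis-magnetization concentration of Proposition \ref{concentration_prop} together with the Liouville estimate \eqref{liouville}. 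The $\epsilon$-averaging over $[s_N,2s_N]$ is the regularizing device that makes all the needed estimates effectively local in the field, so the paper never has to establish \emph{uniform-in-$h$} control of the random SK pressure. Your Laplace route is workable in principle --- one can bound the tail $|x|>R$ using the random Lipschitz constant $L_N=\frac1N\sum_i|\xi_i|$ (which is $O(1)$ in $L^1$), and on $[-R,R]$ a finite net plus the Lipschitz bound reduces $\E\sup_{|x|\le R}|p_N^{SK}(\nu x+\lambda)-\bar p_N^{SK}(\nu x+\lambda)|$ to finitely many pointwise concentrations --- but it requires exactly the uniform-on-compacts concentration of the \emph{random} $p_N^{SK}$ that you flag as the main obstacle, and that step is more delicate than ``pointwise concentration plus convexity.'' The convexity argument you invoke upgrades the \emph{deterministic} convergence $\bar p_N^{SK}\to\mathcal P$ to uniform on compacts; making the \emph{random} fluctuation $p_N^{SK}-\bar p_N^{SK}$ uniformly small needs the net/Lipschitz argument sketched above, not convexity alone, since a difference of convex functions need not be convex. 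Finally, your closing sentence mis-attributes the role of Proposition \ref{concentration_prop}: in your HS/Laplace scheme what is actually needed is concentration of the SK free energy $p_N^{SK}$ around its quenched mean, uniformly on compacts, not the concentration of $m_N(\boldsymbol\sigma|\boldsymbol\xi)$ under the full Gibbs measure. The latter is what the paper's interpolation scheme needs (to make $\Delta_\epsilon$ vanish), and it is a statement about a different Gibbs measure than the one you obtain after HS. The two are related through the saddle-point equation $x^*=\langle m_N\rangle_{SK}$, but they are not interchangeable, so as written that last step is conceptually off even though your overall plan is sound. The $L^2$ self-averaging via Efron--Stein and the fourth-moment hypothesis coincides with the paper's Lemma \ref{L2selfaveraging}.
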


{
From the form of the variational principle we can deduce also the differentiability properties of the limiting pressure that we have collected in the following
}
\begin{corollary}\label{corollary_differentiability}
{
$p(\mu,\nu,\lambda)$ is $\lambda$-differentiable if and only if $\varphi(\,\cdot\,;\mu,\nu,\lambda)$ has a unique supremum point $x=\bar{x}(\mu,\nu,\lambda)$ and in that case
\begin{align}\label{identificazione_Mattis_x}
    \bar{x}=\left.\frac{\partial}{\partial h}\mathcal{P}(\sqrt{\mu},h)\right|_{h=\nu\bar{x}+\lambda}=\lim_{N\to\infty}\mathbb{E}\langle m_N(\boldsymbol{\sigma}|\boldsymbol{\xi})\rangle_N\,.
\end{align}
}
{
$p(\mu,\nu,\lambda)$ is $\nu$-differentiable if and only if $\varphi(\,\cdot\,;\mu,\nu,\lambda)$ has at most two symmetric supremum points $\{\bar{x},-\bar{x}\}$ and it holds
\begin{align}
    \frac{\partial}{\partial\nu}p(\mu,\nu,\lambda)=\frac{\bar{x}^2}{2}\,.
\end{align}
Let $\xi\sim P_\xi$ be centered and $\nu>0$. If $\varphi(\,\cdot\,;\mu,\nu,\lambda=0)$ has at most two symmetric supremum points $\{\bar{x},-\bar{x}\}$ then
$p(\mu,\nu,0)$ is $\mu$-differentiable and it holds
\begin{align}
    \frac{\partial}{\partial\mu}p(\mu,\nu,0)=\frac{1}{4}\left(1-\int q^2 d\chi^*(\sqrt{\mu},\nu\bar{x};q)\right)\,.
\end{align}
where $\chi^*(\beta,h)$ denotes the unique Parisi measure solving the Parisi variational principle in \eqref{Parisi_func} for $\beta=\sqrt{\mu}$, $h=\nu\bar{x}$.
}
\end{corollary}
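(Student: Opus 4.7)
My strategy is to apply convex analysis and an envelope-type (Danskin) argument to the variational principle $p(\mu,\nu,\lambda)=\sup_x\varphi(x;\mu,\nu,\lambda)$ of Theorem~\ref{main_thm}, combined with the separate convexity of $\mathcal{P}(\beta,h)$ in $\beta$ and $h$ and with the Gaussian integration-by-parts identity $\partial_\beta \bar p_N^{SK}(\beta,h)=\tfrac{\beta}{2}(1-\E\langle q_{12}^2\rangle_N)$, which under uniqueness of the Parisi measure $\chi^*$ passes to $\partial_\beta\mathcal{P}(\beta,h)=\tfrac{\beta}{2}(1-\int q^2\,\d\chi^*(\beta,h;q))$. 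For every $N$, $\bar p_N(\mu,\nu,\lambda)$ is convex in $\lambda$ (this parameter enters $-H_N$ linearly) and in $\nu$ (since $\partial^2_\nu \bar p_N=\tfrac{N}{4}\,\E\,\mathrm{Var}_{\langle\cdot\rangle_N}(m_N^2)\ge 0$), and the $L^2$-convergence in Theorem~\ref{main_thm} transfers these convexities to $p$. As a consequence, at every $\lambda$ where $\partial_\lambda p$ exists one automatically gets $\partial_\lambda p=\lim_{N\to\infty}\E\langle m_N(\boldsymbol{\sigma}|\boldsymbol{\xi})\rangle_N$ by the classical fact that pointwise-convergent convex functions have their derivatives converging at each point of differentiability of the limit, which is the last equality in \eqref{identificazione_Mattis_x}.

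Writing the stationarity condition $\partial_x\varphi(\bar{x})=0$ at an interior maximiser $\bar{x}$ yields the fixed-point equation
\begin{equation*}
\bar{x}=\partial_h\mathcal{P}(\sqrt{\mu},\nu\bar{x}+\lambda),
\end{equation*}
which is the first equality in \eqref{identificazione_Mattis_x}; substituting it gives $\partial_\lambda\varphi(\bar{x})=\bar{x}$ and $\partial_\nu\varphi(\bar{x})=-\tfrac{\bar{x}^2}{2}+\bar{x}\cdot\bar{x}=\tfrac{\bar{x}^2}{2}$. Envelope theory for a sup of convex parameterised functions identifies the sub-differential of $p$ at $\lambda$ (respectively $\nu$) with the closed convex hull of $\{\partial_\lambda\varphi(\bar{x}):\bar{x}\in\arg\sup\varphi\}$ (respectively $\{\partial_\nu\varphi(\bar{x}):\bar{x}\in\arg\sup\varphi\}$), and differentiability is equivalent to this hull being a singleton. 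For $\lambda$ the set of partial-derivative values is $\arg\sup\varphi$ itself, a singleton iff the maximiser is unique; for $\nu$ it is $\{\bar{x}^2/2:\bar{x}\in\arg\sup\varphi\}$, a singleton iff all maximisers share the same modulus, i.e.\ $\arg\sup\varphi\subset\{\pm\bar{x}\}$. Both iff claims, together with the stated formulas, follow. The compactness of $\arg\sup\varphi$ needed to apply Danskin is automatic: $\varphi(x)\to-\infty$ as $|x|\to\infty$ because $\partial_h\mathcal{P}$ is bounded (by $\E|\xi_1|$) so $\mathcal{P}$ grows at most linearly in $h$, while $-\nu x^2/2$ dominates when $\nu>0$.

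For the $\mu$-statement, the gauge transformation $\boldsymbol{\sigma}\to-\boldsymbol{\sigma}$ inside $\bar p_N^{SK}(\beta,h)$ (permitted by the bilinearity of $H_N^{SK}$ and the fact that the sum runs over all $\boldsymbol{\sigma}\in\Sigma_N$) gives $\mathcal{P}(\beta,-h)=\mathcal{P}(\beta,h)$ and hence $\chi^*(\beta,-h)=\chi^*(\beta,h)$. At $\lambda=0$ this makes $\varphi(\,\cdot\,;\mu,\nu,0)$ even in $x$, consistent with the symmetric-maxima hypothesis. The derivative identity for $\partial_\beta\mathcal{P}$ recalled in the first paragraph, combined with the chain rule $\partial_\mu=\tfrac{1}{2\sqrt{\mu}}\partial_\beta$, gives
\begin{equation*}
\partial_\mu\varphi(\bar{x};\mu,\nu,0)=\tfrac{1}{4}\Bigl(1-\textstyle\int q^2\,\d\chi^*(\sqrt{\mu},\nu\bar{x};q)\Bigr),
\end{equation*}
and the $h$-symmetry of $\chi^*$ makes this value agree at $\bar{x}$ and $-\bar{x}$. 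An application of the envelope theorem in $\mu$ (which here does not require convexity of $p$ in $\mu$, only continuity and $C^1$ dependence of $\varphi$ on $\mu$ for $\mu>0$) then produces the announced formula.

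The main technical obstacle throughout is ensuring that $\mathcal{P}(\beta,h)$ is pointwise differentiable at the specific points needed---in $h$ at $\nu\bar{x}+\lambda$ for the $\lambda$- and $\nu$-parts, and in $\beta$ at $\sqrt{\mu}$ for the $\mu$-part. Convexity alone guarantees only almost-everywhere differentiability, so the conclusion ultimately relies on uniqueness of the Parisi measure at those specific points, a property known for the SK model in non-zero external field and, in the regimes considered here, also at $h=0$. Granted this, the overlap concentration $\E\langle q_{12}^2\rangle_N\to\int q^2\,\d\chi^*$ and the envelope identification of the sub-differentials are standard.
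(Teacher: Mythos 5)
Your proposal is correct and takes essentially the same route as the paper: a Danskin/envelope argument applied to $p=\sup_x\varphi(x;\mu,\nu,\lambda)$, with the stationarity condition $\bar{x}=\partial_h\mathcal{P}(\sqrt{\mu},\nu\bar{x}+\lambda)$ identifying the derivative values, convexity of $\bar{p}_N$ in $\lambda$ plus a Griffiths-type convergence argument for the identification with $\lim_{N\to\infty}\mathbb{E}\langle m_N(\boldsymbol{\sigma}|\boldsymbol{\xi})\rangle_N$, and the spin-flip symmetry in $h$ together with the known $\beta$-differentiability of the Parisi formula (overlap-moment representation) for the $\mu$-part. The only difference is that the paper delegates the $\nu$- and $\mu$-statements to Chen's results and to Talagrand/Panchenko for $\partial_\beta\mathcal{P}$, whereas you carry out the corresponding envelope computations (including the convexity in $\nu$) explicitly, which matches the cited arguments.
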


The proof of \eqref{variational_principle} relies on the adaptive interpolation introduced in inference in order to rigorously prove replica symmetric formulas \cite{adaptive,Barbier_2019, multiview, clement,allerton} (see also \cite{us,DBMNL}). Within this technique the presence of a small perturbation in the Hamiltonian, appearing also in Proposition \ref{concentration_prop} below as $\epsilon$, plays a fundamental regularizing role. Intuitively, it allows us to avoid singularities that may occur in the thermodynamic limit in the vicinity of a possible phase transition. A similar model was studied in \cite{Chen_ferromagnetic} where the author solves a Sherrington-Kirkpatrick model with an added ferromagnetic interaction, that can be derived from \eqref{hamiltonian_WS_mismatched} setting $P_\xi=\delta_{\sqrt{J}}$ with $J>0$ as the interaction strength.
Notice moreover that the variational principle in \eqref{variational_principle} is one dimensional, as far as $x$ is concerned, suggesting thus the self-averaging of an order parameter to be identified with the Mattis magnetization as in \eqref{identificazione_Mattis_x}. Indeed, the following concentration result holds.

\begin{proposition}\label{concentration_prop}Let $\epsilon\in[s_N,2s_N]$ with $s_N=\frac{1}{2}N^{-\alpha}$, $\alpha\in(0,1/2)$. Denote by $\langle\cdot\rangle_{N,y}$ the Boltzmann-Gibbs measure induced by the Hamiltonian $H_N(\boldsymbol{\sigma};\mu,\nu,\lambda +y)$ for any $y\in\mathbb{R}$. Then 
\begin{align}
    \lim_{N\to\infty}\frac{1}{s_N}\int_{s_N}^{2s_N} d\epsilon\, \mathbb{E}\Big\langle
    \left(m_N(\boldsymbol{\sigma}|\boldsymbol{\xi})-\mathbb{E}\langle m_N(\boldsymbol{\sigma}|\boldsymbol{\xi})\rangle_{N,\epsilon}\right)^2
    \Big\rangle_{N,\epsilon}=0\,,
\end{align}
for all $\mu,\nu\geq 0$ and $\lambda\in\mathbb{R}$.
\end{proposition}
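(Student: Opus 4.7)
The plan is to split the variance into thermal and disorder components,
\[
    \mathbb{E}\big\langle(m_N-\mathbb{E}\langle m_N\rangle_{N,\epsilon})^2\big\rangle_{N,\epsilon}
    = \mathbb{E}\big\langle(m_N-\langle m_N\rangle_{N,\epsilon})^2\big\rangle_{N,\epsilon} + \mathrm{Var}(\langle m_N\rangle_{N,\epsilon}),
\]
and to control each separately, in the spirit of the adaptive-interpolation concentration arguments referenced above.

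For the thermal part I use the identities $\partial_\epsilon p_N(\mu,\nu,\lambda+\epsilon) = \langle m_N\rangle_{N,\epsilon}$ and $\partial_\epsilon^2 p_N = N(\langle m_N^2\rangle_{N,\epsilon}-\langle m_N\rangle_{N,\epsilon}^2)$, so that integrating in $\epsilon$ telescopes,
\[
    \int_{s_N}^{2s_N}\!\mathbb{E}\big\langle(m_N-\langle m_N\rangle_{N,\epsilon})^2\big\rangle_{N,\epsilon}\,d\epsilon
    = \frac{1}{N}\,\mathbb{E}\big[\langle m_N\rangle_{N,2s_N}-\langle m_N\rangle_{N,s_N}\big].
\]
Since $|m_N|\leq N^{-1}\sum_i|\xi_i|$ and $\mathbb{E}[\xi_1^4]<\infty$ implies $\mathbb{E}|\xi_1|<\infty$, the right-hand side is $O(1/N)$. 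Dividing by $s_N\propto N^{-\alpha}$ leaves $O(N^{\alpha-1})\to 0$.

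For the disorder part I exploit that $\epsilon\mapsto p_N(\mu,\nu,\lambda+\epsilon)$ and its mean $\bar p_N$ are both convex (their second $\epsilon$-derivatives coincide with $N$ times a thermal variance). A standard finite-difference/convexity argument then yields, for any $\delta>0$,
\[
    \big|\langle m_N\rangle_{N,\epsilon} - \mathbb{E}\langle m_N\rangle_{N,\epsilon}\big|
    \leq \frac{1}{\delta}\!\sum_{\eta\in\{-\delta,0,\delta\}}\!\big|p_N(\epsilon+\eta)-\bar p_N(\epsilon+\eta)\big| + \big(\bar p_N'(\epsilon+\delta)-\bar p_N'(\epsilon-\delta)\big).
\]
The first term is handled through concentration of the pressure: the Gaussian Poincar\'e inequality applied to the $z_{ij}$'s (with $\partial_{z_{ij}}p_N = N^{-1}\sqrt{\mu/(2N)}\,\langle\sigma_i\sigma_j\rangle$) gives variance $O(\mu/N)$, while an Efron--Stein estimate for the $\xi_i$'s, based on the bound $|H_N-H_N^{(i)}|\leq |\xi_i-\xi_i'|\big(\nu N^{-1}\sum_j|\xi_j|+|\lambda+\epsilon|\big)+O(1/N)$, yields an additional $O(1/N)$ precisely thanks to the hypothesis $\mathbb{E}[\xi_1^4]<\infty$. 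The second term is nonnegative by convexity, uniformly bounded by $2\mathbb{E}|\xi_1|$, and its integral over $[s_N,2s_N]$ telescopes to a quantity of size $O(\delta)$, so squaring and integrating produces an $O(\delta)$ contribution as well. Squaring the displayed inequality, taking expectations, integrating and dividing by $s_N$ then gives a bound of the form $C/(N\delta^2)+C'\delta/s_N$; the balancing choice $\delta=(s_N/N)^{1/3}$ yields a global bound of order $N^{(2\alpha-1)/3}$, which vanishes exactly because $\alpha<1/2$.

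The main obstacle is the Efron--Stein step for the $\xi_i$'s: unlike the pure Sherrington--Kirkpatrick variance, the Mattis two-body term makes a single-site perturbation $\xi_i\to\xi_i'$ propagate to every other site through $m_N^2(\boldsymbol{\sigma}|\boldsymbol{\xi})$, so controlling the resulting sum of squared increments is where the fourth-moment hypothesis on $P_\xi$ becomes indispensable. A secondary technicality is that the convexity trick is only pointwise in $\epsilon$: the regime $\alpha\in(0,1/2)$ of Proposition~\ref{concentration_prop} emerges naturally as the range in which the two $\delta$-dependent contributions can be optimally balanced after integration over the window $[s_N,2s_N]$.
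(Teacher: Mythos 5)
Your proposal is correct and follows essentially the same route as the paper: the same thermal/disorder decomposition, the thermal part controlled by the second $\epsilon$-derivative of the pressure integrated over the window, the disorder part via the convexity finite-difference bound combined with $L^2$ concentration of the pressure (the paper's Lemma 5 uses Efron--Stein for both $\mathbf{Z}$ and $\boldsymbol{\xi}$, whereas you use Gaussian Poincar\'e for $\mathbf{Z}$ — an inessential variation), and the same balancing $\delta=(s_N/N)^{1/3}$ giving the rate $N^{(2\alpha-1)/3}$. No gaps to report.
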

The proofs of Theorem \ref{main_thm}{, Corollary \ref{corollary_differentiability}} and  Proposition \ref{concentration_prop} {can be found in Section \ref{proofs_thm1prop2} and} require bounds on the fluctuations of $m_N(\boldsymbol{\sigma}|\boldsymbol{\xi})$.

\subsection{The Gaussian case}
{ Theorem \ref{main_thm} contains a variational  representation for the thermodynamic limit of the quenched pressure density $p_N(\mu,\nu,\lambda)$ under mild assumption on the distribution of the family $\boldsymbol{\xi}$. We should notice that  despite the fact the variational problem is one dimensional, the  potential $\varphi(x;\mu,\nu,\lambda)$ in
\eqref{RSBpotential} contains  a very complicated object, namely the  pressure of a SK model which is given by the Parisi formula. For this reasons  it can be very hard in general to obtain analytical information on the solution of the above variational problem. For instance, an important question is when,  once the potential $\varphi(x;\mu,\nu,\lambda)$ is evaluated at the optimal value for $x$, the Parisi term is solved by a non fluctuating order parameter, i.e. is replica symmetric. The purpose of this subsection is to obtain some detailed insights on the model by studying it on some analytically accessible case, in particular for a specific choice of the family $\boldsymbol{\xi}$ that allows a quantitative description of the phase diagram.}
{ We choose the family $\boldsymbol{\xi}$ to be $i.i.d$ centered Gaussian, $P_\xi=\mathcal{N}(0,a)$, and we set $\nu=\mu$ and $\lambda=0$. The above choice for the parameters  $\mu,\nu,\lambda$ and its link with high dimensional inference problems is discussed in Sect. \ref{analogy_inference}. We will show that in this setting one can use the  nice result in
\cite{ChenATsharp} on the sharpness of the de Almeida-Thouless line for Gaussian centered external magnetic fields for the SK model, to perform an in-depth analysis of the variational problem in Theorem \ref{main_thm}. With a slight abuse of notation, we denote the corresponding quenched pressure by $\bar{p}_N(\mu,a)$.} We show that it is possible to identify the regions in the phase plane $(\mu,a)$ where $\mathcal{P}$ defined in \eqref{Parisi_func} can be replaced with its replica symmetric version, thus obtaining the following replica symmetric potential
\begin{align}\label{RSpotential_gaussground}
\begin{split}
    \varphi_{RS}(x;\mu,a)&:=-\frac{\mu x^2}{2}+\frac{\mu(1-q(x,\mu,a))^2}{4}+\mathbb{E}\log\cosh\left(z\sqrt{\mu q(x,\mu,a)}
    +\mu\xi x\right)
\end{split}
\end{align}where $q(x,\mu,a)$ is uniquely defined, thanks to the Latala-Guerra lemma \cite{Tala_vol1}, by the consistency equation 
\begin{align}\label{consistencyq_RS_gaussground}
    q(x,\mu,a)=\mathbb{E}\tanh^2\left(z\sqrt{\mu q(x,\mu,a)}
    +\mu\xi x\right)\,,
\end{align}
for any $x>0$ and we extend it to $x=0$ by continuity.
The properties of $\varphi_{RS}$ are hereby collected:

\begin{proposition}\label{properties_phiRS}
The following properties hold:
\begin{enumerate}
    \item $\varphi_{RS}(-x;\mu,a)=\varphi_{RS}(x;\mu,a)$;
    \item $\lim_{|x|\to\infty}\varphi_{RS}(x;\mu,a)=-\infty$;
    \item there exists a unique maximum point, up to reflection, $x=\bar{x}(\mu,a)\geq0$ which is either $0$ or satisfies
    \begin{align}\label{equationforx_RS}
        q(\bar{x}(\mu,a),\mu,a)=1-\frac{1}{\mu a}\,;
    \end{align}
    
    \item the solution of \eqref{equationforx_RS} exists and is unique if and only if
    \begin{align}\label{conditionforbarx}
        a\geq \frac{1}{\mu(1-q(0,\mu,0))}\,.
    \end{align}
    The previous is always fulfilled if $a\geq 1/\mu$ and $a\geq 1$;
    \item under the hypothesis \eqref{conditionforbarx} the solution to \eqref{equationforx_RS} is stable:
    \begin{align}
        \left.\frac{d^2\varphi_{RS}(x;\mu,a)}{dx^2}\right|_{x=\pm \bar{x}(\mu,a)}<0\,.
    \end{align}
\end{enumerate}
\end{proposition}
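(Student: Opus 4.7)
My plan is to handle the five items sequentially, using throughout the reformulation that, since $\xi\sim\mathcal{N}(0,a)$ is independent of $z\sim\mathcal{N}(0,1)$, the variable $u:=z\sqrt{\mu q}+\mu x\xi$ is centered Gaussian with variance $\sigma^2(x,q):=\mu q+\mu^2 ax^2$. Accordingly \eqref{consistencyq_RS_gaussground} rewrites as the scalar fixed-point equation $q=g(\sigma^2(x,q))$ with $g(s):=\mathbb{E}\tanh^2(\sqrt{s}\,W)$, $W\sim\mathcal{N}(0,1)$, which is strictly increasing on $[0,\infty)$ with $g(0)=0$, $g'(0)=1$ and $g(s)\to 1$ as $s\to\infty$.

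Items 1 and 2 are immediate: symmetry of $P_\xi$ and evenness of $\cosh$ give item 1; the bound $\mathbb{E}\log\cosh(u)\le\mathbb{E}|u|\le\sqrt{\mu}\,\mathbb{E}|z|+\mu|x|\,\mathbb{E}|\xi|$ combined with $q\in[0,1]$ lets the $-\mu x^2/2$ term dominate and yields item 2. For item 3 I would use an envelope identity: because $q(x,\mu,a)$ is by construction a stationary point of \eqref{RSpotential_gaussground} in $q$ (a Gaussian integration by parts in $z$ combined with \eqref{consistencyq_RS_gaussground} shows $\partial_q\varphi_{RS}=0$ along the consistency curve), the total derivative in $x$ reduces to the partial derivative at $q$ fixed; a second Gaussian integration by parts in $\xi\sim\mathcal{N}(0,a)$ then produces
\begin{align*}
\frac{d\varphi_{RS}}{dx}(x;\mu,a)=\mu x\bigl[-1+a\mu(1-q(x,\mu,a))\bigr],
\end{align*}
whose zeros are $x=0$ and the solutions of \eqref{equationforx_RS}.

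Item 4 is a comparison between the range of $x\mapsto q(x,\mu,a)$ and the constant $1-1/(\mu a)$: at $x=0$ the field drops out, so $q(0,\mu,a)=q(0,\mu,0)$, and $q(x,\mu,a)\to 1$ as $|x|\to\infty$ follows from $g(s)\to 1$. Granted the strict monotonicity discussed in the last paragraph, \eqref{equationforx_RS} admits a unique positive solution iff $1-1/(\mu a)\ge q(0,\mu,0)$, exactly \eqref{conditionforbarx}. The sufficient condition $a\ge\max(1/\mu,1)$ splits in two cases: for $\mu\le 1$ one has $q(0,\mu,0)=0$ and the bound reduces to $a\ge 1/\mu$; for $\mu>1$, setting $q_0:=q(0,\mu,0)>0$, the Gaussian Poincaré inequality applied to $z\mapsto\tanh(z\sqrt{\mu q_0})$ gives $q_0=\mathrm{Var}(\tanh(z\sqrt{\mu q_0}))\le\mu q_0\,\mathbb{E}\bigl[(1-\tanh^2(z\sqrt{\mu q_0}))^2\bigr]\le\mu q_0(1-q_0)$, whence $\mu(1-q_0)\ge 1$ and $\mu a(1-q_0)\ge a\ge 1$. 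Item 5 then follows by differentiating the formula of item 3 once more:
\begin{align*}
\varphi_{RS}''(x)=\mu\bigl[-1+a\mu(1-q(x,\mu,a))\bigr]-a\mu^2 x\,\partial_x q(x,\mu,a);
\end{align*}
at $x=\pm\bar x$ the bracket vanishes by \eqref{equationforx_RS}, leaving $\varphi_{RS}''(\pm\bar x)=-a\mu^2\bar x\,\partial_x q(\bar x,\mu,a)$ (the two signs agreeing by evenness of $\varphi_{RS}$), which is strictly negative as soon as $\partial_x q(\bar x,\mu,a)>0$.

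The main obstacle, needed in both items 4 and 5, is the strict monotonicity of $x\mapsto q(x,\mu,a)$ on $[0,\infty)$. This does not follow directly from Latala-Guerra (which gives only uniqueness at each fixed $x$), but can be established by a Gaussian comparison: decomposing $\sqrt{\mu q+\mu^2 ax^2}\,W=\sqrt{\mu q}\,W_1+\mu x\sqrt{a}\,W_2$ with independent standard Gaussians $W_1,W_2$, the right-hand side of $q=g(\sigma^2(x,q))$ at $q$ fixed rewrites as $\mathbb{E}_{W_2}[f(\mu x\sqrt{a}\,W_2)]$, where $f(h):=\mathbb{E}_{W_1}\tanh^2(\sqrt{\mu q}\,W_1+h)$ is even and strictly increasing on $[0,\infty)$; hence this quantity is strictly increasing in $|x|$ by stochastic monotonicity of $|\mu x\sqrt{a}\,W_2|$. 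Combined with the Latala-Guerra uniqueness, this shifts the unique fixed point strictly upward as $|x|$ grows, which is precisely what is required to close items 4 and 5.
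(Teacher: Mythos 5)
Your proof is correct, and on items 1, 2, 3 and 5 it follows essentially the paper's route: evenness from the symmetry of $P_\xi$, divergence from the quadratic term, the stationarity equation $\frac{d\varphi_{RS}}{dx}=\mu x\left[-1+\mu a\left(1-q(x,\mu,a)\right)\right]$ obtained by combining the envelope property of the RS potential in $q$ with Gaussian integration by parts in $\xi$, and stability from $\varphi_{RS}''(\pm\bar{x})=-\mu^2a\,\bar{x}\,\partial_xq<0$. You depart from the paper in two genuine ways. First, the paper simply asserts that $x\mapsto q(x,\mu,a)$ is increasing for $x\geq0$, while you prove strict monotonicity by decomposing the effective field into independent Gaussians and using that $h\mapsto\mathbb{E}\tanh^2(\sqrt{\mu q}\,W_1+h)$ is even and increasing on $[0,\infty)$, then invoking Latala-Guerra uniqueness to push the fixed point upward; this is a useful addition, and (via implicit differentiation, using that $1-\partial_qF>0$ at the Latala-Guerra fixed point and that your monotonicity estimate is strict for $x>0$) it also delivers the pointwise positivity of $\partial_xq(\bar{x},\mu,a)$ that item 5 needs and that the paper likewise leaves implicit. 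Second, and more substantially, for the claim that \eqref{conditionforbarx} holds whenever $a\geq1/\mu$ and $a\geq1$, the paper argues through Chen's sharp de Almeida-Thouless criterion: it proves $AT(\mu,0)>1$ for $\mu>1$ by contradiction with the Parisi formula and then deduces $q(0,\mu,0)<1-1/\mu$. You instead obtain $\mu\left(1-q(0,\mu,0)\right)\geq1$ directly from the Gaussian Poincar\'e inequality applied to $z\mapsto\tanh(z\sqrt{\mu q_0})$ together with $(1-\tanh^2)^2\leq1-\tanh^2$; this is more elementary and self-contained, and it suffices for \eqref{conditionforbarx} as stated since that condition is a non-strict inequality. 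Two small caveats: your Poincar\'e bound yields $q(0,\mu,0)\leq1-1/\mu$ rather than the strict inequality \eqref{cricondition}, which the paper also exploits later (see \eqref{temp0}) to conclude $AT(\mu,0)>1$ in the phase-diagram analysis, so your shortcut complements rather than replaces the AT-based argument there; and your assertion $q(0,\mu,0)>0$ for $\mu>1$ is left unproved, but it is harmless, since if $q(0,\mu,0)=0$ then \eqref{conditionforbarx} reduces to $a\geq1/\mu$, which is assumed.
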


\begin{figure}[hb!!!]
    \centering
    \includegraphics[width=.7\textwidth]{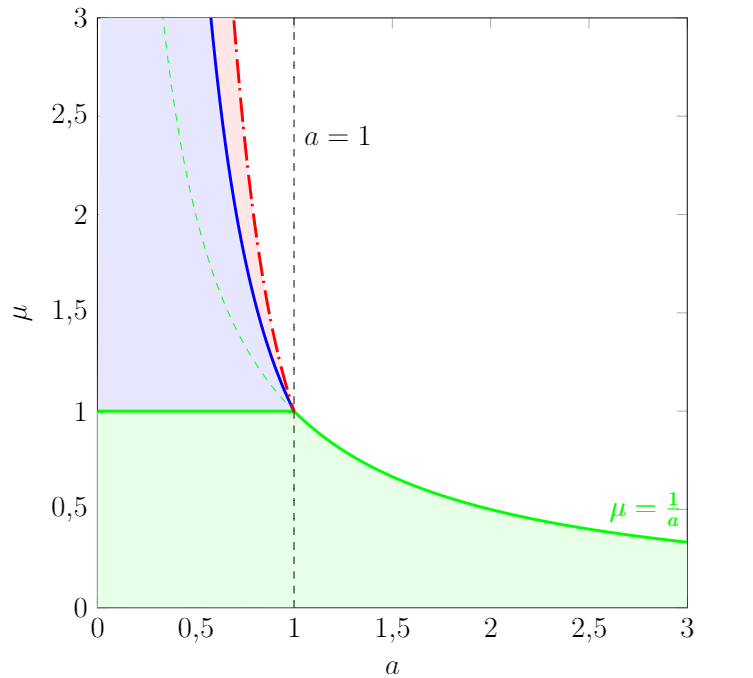}
    \caption{{ {\footnotesize Model phase diagram when $\xi_i\iid\mathcal{N}(0,a)$. Green region: fully paramagnetic phase, where the Parisi overlap distribution is a Dirac delta centered at 0 as well as the Mattis magnetization. White region: ferromagnetic, replica symmetric phase. Here, the Parisi overlap distribution is still a Dirac located according to \eqref{consistencyq_RS_gaussground} and \eqref{equationforx_RS}. The distribution of the Mattis magnetization is instead a sum of two deltas centered at $\bar{x}$ and $-\bar{x}$ with $1/2$ coefficients, namely the solutions of \eqref{equationforx_RS}. The model therefore turns out to be replica symmetric in the green and white areas.
    The blue region is delimited by $\mu=1$ and the blue curve drawn (here only qualitatively) by \eqref{conditionforbarx}, which is above the green dashed hyperbola $\mu=1/a$. In this region the model is in a replica symmetry breaking phase, \emph{i.e.} the Parisi distribution is no longer concentrated at a single point and $\bar{x}$ solves the more general variational principle \eqref{variational_principle}.
    With reference to Proposition \ref{prop: ATline}, the dash-dotted red line $AT(\mu,a)=1$, here drawn qualitatively, must contain the entire RSB phase, hence it must lie above (or at most touch) the blue curve. The analogy with the SK model (see Remark \ref{analogy_SKmodel} below) would suggest the presence of a mixed phase in the red region where $\bar{x}\neq0$ and the overlap fluctuates.}
    }
    }\label{phase_diagram}
\end{figure}

Finally, we give a sharp criterion to establish when the replica symmetric potential can be used to obtain the solution to the variational problem.
\begin{proposition}\label{prop: ATline}
Define the function
\begin{align}
    AT(\mu,a):=\mu\mathbb{E}\cosh^{-4}\left(z\sqrt{\mu q(\bar{x}(\mu,a),\mu,a)}+\mu\xi\bar{x}(\mu,a)\right)\,.
\end{align}
Then 
\begin{align}\label{variationRS}
    \lim_{N\to\infty}\bar{p}_N(\mu,a)=\sup_{x\in\mathbb{R}}\varphi_{RS}(x;\mu,a)\quad \text{iff}\quad AT(\mu,a)\leq 1\,.
\end{align}
\end{proposition}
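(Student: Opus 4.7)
The plan is to combine Theorem \ref{main_thm} with the sharp AT criterion of \cite{ChenATsharp}, which asserts that for the SK model with centered Gaussian external field the Parisi formula equals its replica symmetric counterpart if and only if the AT stability condition holds. Setting $\nu=\mu$, $\lambda=0$ in Theorem \ref{main_thm} gives $\lim_N \bar p_N(\mu,a)=\sup_x \varphi(x;\mu,\mu,0)$ with $\varphi(x;\mu,\mu,0)=-\mu x^2/2+\mathcal P(\sqrt{\mu},\mu x)$, so the external field inside the Parisi pressure is $\mu x\,\xi$ with $\xi\sim\mathcal N(0,a)$. At the maximizer $\bar x=\bar x(\mu,a)$ of $\varphi_{RS}$ (Proposition \ref{properties_phiRS}), the AT stability parameter of the corresponding SK model, evaluated at inverse temperature $\sqrt{\mu}$, Gaussian field $\mu\bar x\xi$ and RS overlap $q(\bar x,\mu,a)$ from \eqref{consistencyq_RS_gaussground}, is precisely the scalar $AT(\mu,a)$ in the statement. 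Moreover, since $\mathcal P_{RS}$ is the Parisi functional evaluated on a Dirac mass, Guerra's bound \cite{Guerra_upper_bound} gives $\mathcal P(\sqrt{\mu},\mu x)\le \mathcal P_{RS}(\sqrt{\mu},\mu x)$ and hence the pointwise inequality $\varphi(x;\mu,\mu,0)\le \varphi_{RS}(x;\mu,a)$ for every $x\in\mathbb R$. In particular $\sup_x \varphi\le\sup_x\varphi_{RS}$ holds unconditionally.

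For sufficiency, assume $AT(\mu,a)\le 1$. Chen's theorem applied at field $\mu\bar x\xi$ yields $\mathcal P(\sqrt{\mu},\mu\bar x)=\mathcal P_{RS}(\sqrt{\mu},\mu\bar x)$, i.e.\ $\varphi(\bar x;\mu,\mu,0)=\varphi_{RS}(\bar x;\mu,a)=\sup_x\varphi_{RS}$ by Proposition \ref{properties_phiRS}. Combining this with the unconditional upper bound forces $\sup_x\varphi=\sup_x\varphi_{RS}$, which is \eqref{variationRS}.

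For necessity, assume $\sup_x\varphi=\sup_x\varphi_{RS}$. The map $\varphi(\cdot;\mu,\mu,0)$ is continuous on $\mathbb R$ and tends to $-\infty$ as $|x|\to\infty$ (the quadratic $-\mu x^2/2$ dominates the at-most-linear growth of $\mathcal P(\sqrt{\mu},\mu x)$ in $|x|$), so it attains its supremum at some $x^\star\in\mathbb R$. From the chain
\begin{align*}
    \varphi(x^\star;\mu,\mu,0)=\sup_x\varphi=\sup_x\varphi_{RS}=\varphi_{RS}(\bar x;\mu,a)\ge \varphi_{RS}(x^\star;\mu,a)\ge \varphi(x^\star;\mu,\mu,0),
\end{align*}
together with the uniqueness up to reflection of $\bar x$ from Proposition \ref{properties_phiRS}, one obtains $x^\star\in\{\pm\bar x\}$; by the evenness of both $\varphi$ and $\varphi_{RS}$ in $x$ (the law of $\xi$ is symmetric) this gives $\varphi(\bar x;\mu,\mu,0)=\varphi_{RS}(\bar x;\mu,a)$. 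The ``only if'' half of \cite{ChenATsharp} then yields $AT(\mu,a)\le 1$. The main subtlety is the bookkeeping carried out in the first paragraph: one must verify that the inverse temperature, the Gaussian external field, and the replica symmetric overlap fed into Chen's theorem at $x=\bar x$ assemble to precisely the scalar $AT(\mu,a)$ defined in the statement; once this identification is made the rest of the argument is a direct use of Guerra's bound and the sharp AT equivalence.
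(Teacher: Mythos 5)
Your argument is correct in substance and rests on the same three ingredients as the paper's proof --- Theorem \ref{main_thm}, Guerra's replica symmetric upper bound (which gives your pointwise inequality $\varphi(x;\mu,\mu,0)\le\varphi_{RS}(x;\mu,a)$), and the sharp AT characterization of \cite{ChenATsharp} --- but it arranges them differently. The paper introduces the region $RS(\mu,a)$ of points $x$ where the AT condition for the SK model with field $\mu x\xi$ holds: for sufficiency it restricts the uniform lower bound \eqref{boundxuniform} to that region and applies the \emph{if} direction of Chen's equivalence there, while for necessity it works with a maximizer $\tilde x$ of $\varphi$ and splits into two cases, using Chen's \emph{if} direction when $\tilde x\in RS(\mu,a)$ (strictness coming from the uniqueness of $\bar x$) and Toninelli's strict inequality \cite{ToninelliAT} when $\tilde x\notin RS(\mu,a)$, thereby even obtaining the strict statement $\lim_{N}\bar p_N(\mu,a)<\sup_x\varphi_{RS}(x;\mu,a)$ whenever $AT(\mu,a)>1$. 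You instead evaluate at the single point $\bar x$ for sufficiency and, for necessity, run a sandwich argument forcing any maximizer of $\varphi$ to equal $\pm\bar x$ and forcing $\varphi(\bar x;\mu,\mu,0)=\varphi_{RS}(\bar x;\mu,a)$, after which you invoke the \emph{only if} half of the equivalence; this spares you the $RS(\mu,a)$ bookkeeping and the explicit appeal to \cite{ToninelliAT}, whose content is essentially what that \emph{only if} half encodes. One boundary case in your necessity step deserves care: when $\bar x(\mu,a)=0$ with $\mu>1$ the field $\mu\bar x\xi$ degenerates to zero and $q(0,\mu,a)$ is only the continuity extension (the nonzero stable fixed point), so citing \cite{ChenATsharp} there is not a literal application; in that case one should instead use the known strict inequality $\mathcal P(\sqrt{\mu},0)<\mathcal P(\delta_{q(0,\mu,0)};\sqrt{\mu},0)$ for $\mu>1$ --- the same fact the paper exploits, via a small-field limiting argument, to establish \eqref{AT_consideration} --- to exclude this case and complete the implication.
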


{
The proofs of Propositions \ref{properties_phiRS} and \ref{prop: ATline} can be found in Section \ref{proofsprop3-4}.} The previous results for Gaussian $\xi_i$'s and their consequences can be gathered together in the phase diagram in \figurename\,\ref{phase_diagram} which will be studied in detail in the dedicated Section \ref{Phase_diagram_section}.

\section{Mismatched Setting in High Dimensional Statistical Inference}\label{analogy_inference}
The Hamiltonian \eqref{hamiltonian_WS_mismatched} with $\nu=\mu$ (which is not restrictive, one can reabsorb $\nu$ in the $\xi_i$'s) and $\lambda=0$ can be derived also from a high dimensional inference problem, called Wigner spiked model in literature \cite{Lenka,Lelarge2017FundamentalLO,adaptive} (see also \cite{Jean-lenka2,MourratHJ_finite_rank,Reeves,Mourrat2}), in a mismatched setting. In this problem the task is to recover a non negligible fraction of components of a high dimensional signal called \emph{ground truth} $\boldsymbol{\xi}$. { The signal, in order to ease reconstruction, is sent in couples \cite{nishimori01} through a channel which corrupts it with Gaussian noise $\mathbf{Z}$. }The receiver will then get the message $\mathbf{y}$, \emph{i.e.} the $N^2$ quantities
\begin{align}\label{channel}
     y_{ij}:=\sqrt{\frac{\mu}{2N}}\xi_i\xi_j+z_{ij}\,,\quad z_{ij}\iid\mathcal{N}(0,1)\,.
\end{align}
He also knows how the observations are generated, namely he is aware of the law \eqref{channel} and consequently of the conditional distribution
\begin{align}\label{likelihood}
    p_{\mathbf{Y}|\boldsymbol{\xi}=\mathbf{x}}(\mathbf{y})=\frac{\exp\left[-\frac{1}{2}\sum_{i,j=1}^N\left(y_{ij}-\sqrt{\frac{\mu}{2N}}x_ix_j\right)^2\right]}{(2\pi)^{N^2/2}}\,,
\end{align}for some value $\mathbf{x}$. However, he does not know the distribution of the $\xi_i$'s and assumes them to be binary $\pm1$ as the $\sigma_i$'s with equal prior probability.
Thus, according to Bayes' rule, the posterior distribution used by the receiver is
\begin{align}\label{posterior}
    P_{\boldsymbol{\xi}|\mathbf{Y}=\mathbf{y}}(\boldsymbol{\sigma})=\frac{\exp\left[-\frac{1}{2}\sum_{i,j=1}^N\left(y_{ij}-\sqrt{\frac{\mu}{2N}}\sigma_i\sigma_j\right)^2\right]}{2^N(2\pi)^{N^2/2}p_{\mathbf{Y}}(\mathbf{y})}\,,
\end{align}
where
\begin{align}\label{evidence}
    p_\mathbf{Y}(\mathbf{y})=\frac{1}{2^N}\sum_{\boldsymbol{\sigma}\in\Sigma_N}\frac{\exp\left[-\frac{1}{2}\sum_{i,j=1}^N\left(y_{ij}-\sqrt{\frac{\mu}{2N}}\sigma_i\sigma_j\right)^2\right]}{(2\pi)^{N^2/2}}\,.
\end{align}
A straightforward computation shows that the posterior distribution \eqref{posterior} can be rewritten as a random Boltzmann-Gibbs measure whose Hamiltonian is precisely $H_N(\boldsymbol{\sigma};\mu,\mu,0)$. To see this it is sufficient to compute the square at the exponent in \eqref{posterior} and reabsorb all the constants not depending on $\boldsymbol{\sigma}$ in the normalization.

It is important to stress that nor the posterior \eqref{posterior} neither the so called \emph{evidence} \eqref{evidence} are correct, in the sense that there is a mismatch between the receiver's prior and $P_\xi$. The true distribution of the $y_{ij}$'s is instead
\begin{align}
    p_\mathbf{Y}^*(\mathbf{y})=\int dP_\xi(\mathbf{x}) \frac{\exp\left[-\frac{1}{2}\sum_{i,j=1}^N\left(y_{ij}-\sqrt{\frac{\mu}{2N}}x_ix_j\right)^2\right]}{(2\pi)^{N^2/2}}\,,
\end{align}
with $dP_\xi(\mathbf{x})=\prod_{i=1}^NdP_\xi(x_i)$. { With these notations one can proceed with the computation of the {cross} entropy density }
\begin{align}\label{relative_entropy}
    \frac{1}{N}\mathcal{H}(p^*_\mathbf{Y},p_\mathbf{Y})=-\frac{1}{N}\int d\mathbf{y}\,p_\mathbf{Y}^*(\mathbf{y})\log p_\mathbf{Y}(\mathbf{y})\,,
\end{align}{ 
a quantity that can be evaluated only by a third party observer aware of both $P_\xi$ and the mismatched prior.
By inserting $p_\mathbf{Y}$ and the \eqref{channel} in the \eqref{relative_entropy} one obtains}
\begin{align}
    \begin{split}
        &\frac{1}{N}\mathcal{H}(p^*_\mathbf{Y},p_\mathbf{Y})=-\frac{1}{N}\mathbb{E}_\mathbf{Z}\mathbb{E}_{\boldsymbol{\xi}}\log\sum_{\boldsymbol{\sigma}\in\Sigma_N}\frac{\exp\left[-\frac{1}{2}\sum_{i,j=1}^N\left(z_{ij}+\sqrt{\frac{\mu}{2N}}(\xi_i\xi_j-\sigma_i\sigma_j)\right)^2
        \right]}{2^N(2\pi)^{N^2/2}}\\
        &=\frac{N}{2}\log 2\pi e +\frac{\mu}{4}\mathbb{E}^2_\xi[\xi_1^2]+\frac{\mu}{4}+O\left(\frac{1}{N}\right)+\log 2-\frac{1}{N}\mathbb{E}_\mathbf{Z}\mathbb{E}_{\boldsymbol{\xi}}\log\sum_{\boldsymbol{\sigma}\in\Sigma_N}e^{-H_N(\boldsymbol{\sigma};\mu,\mu,0)}\,.
    \end{split}
\end{align}
The first term is the Shannon entropy of the noise, whereas the last one is, up to a sign, the quenched pressure $\bar{p}_N(\mu,\mu,0)$. In the optimal setting, namely when $P_\xi=(\delta_1+\delta_{-1})/2$, \eqref{relative_entropy} is just the entropy of the observations and the quantity $\frac{\mu}{4}+\frac{\mu}{4}+\log2-\bar{p}_N(\mu,\mu,0)$ is the mutual information $\frac{1}{N}I(\mathbf{Y},\boldsymbol{\xi})$ between the ground truth signal and the observations up to $O\left(\frac{1}{N}\right)$.

{
Using integration by parts it is straightforward to show that
\begin{multline}\label{MSE_vs_entropy}
    \frac{d}{d\mu}\frac{\mathcal{H}(p^*_\mathbf{Y},p_\mathbf{Y})}{N}=\frac{\mathbb{E}^2_\xi[\xi_1^2]-2\mathbb{E}\langle m_N^2(\boldsymbol{\sigma}|\boldsymbol{\xi})\rangle_N+\mathbb{E}\langle q_N^2(\boldsymbol{\sigma},\boldsymbol{\tau})\rangle_N}{4}=\\
    =\frac{1}{4N^2}\mathbb{E}\Vert\boldsymbol{\xi}\otimes\boldsymbol{\xi}-\langle\boldsymbol{\sigma}\otimes\boldsymbol{\sigma}\rangle_N\Vert_F^2\,,
\end{multline}
where by $\langle f(\boldsymbol{\sigma},\boldsymbol{\tau})\rangle_N$ we mean the expectation w.r.t. the replicated Boltzmann-Gibbs measure $\langle\cdot\rangle_N^{\otimes 2}$, that averages over $\boldsymbol{\sigma}$ and $\boldsymbol{\tau}$ independently but with the same quenched disorder. The previous equation relates the cross entropy \eqref{relative_entropy} to the theoretical expected Mean Square Error (MSE) in Frobenius' norm that the receiver would make using the Bayesian a posteriori estimator $\langle\boldsymbol{\sigma}\otimes\boldsymbol{\sigma}\rangle=(\langle\sigma_i\sigma_j\rangle)_{i,j=1}^N$ for the ground truth diad $\boldsymbol{\xi}\otimes\boldsymbol{\xi}=(\xi_i\xi_j)_{i,j=1}^N$. As intuition suggests, the estimation performed in the matched setting produces a MSE which is the smallest possible, therefore called Minimum Mean Square Error (MMSE). Namely there is no better estimator than the mean w.r.t the \emph{true} posterior which in particular entails that the MSE in \eqref{MSE_vs_entropy} is sub-optimal \cite{Verdu}. It is worth stressing again that the MSE \eqref{MSE_vs_entropy} can be evaluated only by the aforementioned third party observer, since it derives directly from $\mathcal{H}(p_\mathbf{Y}^*,p_\mathbf{Y})$.

The MSE in the high dimensional limit can be evaluated using Theorem \ref{main_thm} and the following
\begin{lemma}
Let $I$ be an open real interval, $\{g_n\}_{n\in\mathbb{N}}$ a sequence of differentiable functions defined on $I$ converging pointwise to a differentiabile function $g$. Suppose there exists a differentiable function $f$ on $I$ such that $\{g_n+f\}_{n\in\mathbb{N}}$ is a sequence of convex differentiable functions on $I$. Then $\lim_{N\to\infty}g'_n(x)=g'(x)$.
\end{lemma}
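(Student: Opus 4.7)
The plan is to reduce the statement to the classical fact that pointwise limits of convex functions have convergent derivatives at points where the limit is differentiable, and then strip off the auxiliary convexifier $f$.

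Concretely, set $h_n:=g_n+f$ and $h:=g+f$. By hypothesis each $h_n$ is convex and differentiable on $I$, the sequence $\{h_n\}$ converges pointwise to $h$ on $I$, and $h$ is differentiable on $I$. Since
\begin{align*}
g_n'(x)=h_n'(x)-f'(x),
\end{align*}
it suffices to prove that $h_n'(x)\to h'(x)$ for every $x\in I$, as $f$ is differentiable.

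The key step is the two-sided estimate coming from convexity. Fix $x\in I$ and choose $a,b\in I$ with $a<x<b$. The standard monotonicity of difference quotients for a convex differentiable function gives, for every $n$,
\begin{align*}
\frac{h_n(x)-h_n(a)}{x-a}\;\le\;h_n'(x)\;\le\;\frac{h_n(b)-h_n(x)}{b-x}.
\end{align*}
Taking $\liminf$ and $\limsup$ in $n$ and using pointwise convergence $h_n\to h$, one obtains
\begin{align*}
\frac{h(x)-h(a)}{x-a}\;\le\;\liminf_{n\to\infty}h_n'(x)\;\le\;\limsup_{n\to\infty}h_n'(x)\;\le\;\frac{h(b)-h(x)}{b-x}.
\end{align*}
Now let $a\nearrow x$ and $b\searrow x$. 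By the assumed differentiability of $h$ at $x$, both outer difference quotients tend to $h'(x)$, forcing $\lim_n h_n'(x)=h'(x)$.

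There is no real obstacle here; the only subtlety worth flagging is that one should not attempt to use uniform convergence or Dini-type arguments (the convergence in our application is only pointwise), and that differentiability of $h$ at $x$, rather than of the $h_n$ alone, is what pins the upper and lower cluster values together. The convexity-based squeeze above is exactly what makes pointwise convergence of the values propagate to pointwise convergence of the derivatives, and subtracting $f'(x)$ then yields $g_n'(x)\to g'(x)$.
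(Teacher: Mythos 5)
Your proof is correct and follows exactly the same reduction as the paper: pass to $h_n = g_n + f$ to land in the convex setting, deduce convergence of $h_n'$, then subtract $f'$. The only difference is that the paper cites Griffith's Lemma (Ellis, Lemma IV.6.3) for the convex step, whereas you reprove that lemma directly via the monotonicity-of-difference-quotients squeeze; both are fine, and your version is self-contained.
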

\begin{proof}
The statement follows immediately from an application of Griffith's Lemma (see for instance \cite{ellis_book_largedev}, Lemma IV.6.3) to the sequence $\tilde{g}_n=g_n+f$. 
\end{proof}
For our cross entropy density sequence, which is not convex due to the lack of Nishimori identities, one can prove that 
\begin{align}
    \frac{\tilde{\mathcal{H}}(p^*_\mathbf{Y},p_\mathbf{Y})}{N}=\frac{\mathcal{H}(p^*_\mathbf{Y},p_\mathbf{Y})}{N}-\mu\log\mu
\end{align}is concave by a direct computation of its second derivative w.r.t. $\mu$. We leave the details of the computation to the interested reader. Hence the previous Lemma, under the hypothesis for $\mu$($\nu$)-differentiability of $p(\mu,\nu,\lambda=0)$ in Corollary \ref{corollary_differentiability}, implies that
\begin{multline}
    \lim_{N\to\infty}\frac{1}{4N^2}\mathbb{E}\Vert\boldsymbol{\xi}\otimes\boldsymbol{\xi}-\langle\boldsymbol{\sigma}\otimes\boldsymbol{\sigma}\rangle\Vert_F^2=\frac{d}{d\mu}\left[
    \frac{\mu}{4}\mathbb{E}^2_\xi[\xi_1^2]+\frac{\mu}{4}-p(\mu,\mu,0)
    \right]=\\
    =\frac{1}{4}\mathbb{E}^2_\xi[\xi_1^2]+\frac{1}{4}-\frac{\bar{x}^2}{2}-\frac{1}{2\sqrt{\mu}}\partial_\beta\left.\mathcal{P}(\beta,\mu\bar{x})\right|_{\beta=\sqrt{\mu}}=\\
    =\frac{1}{4}\mathbb{E}^2_\xi[\xi_1^2]+\frac{1}{4}-\frac{\bar{x}^2}{2}-\frac{1}{4}\left(1-\int q^2d\chi^*(\sqrt{\mu},\mu\bar{x};q)\right)\,,
\end{multline}
where we have replaced the derivative of the Parisi functional w.r.t. $\beta$ as prescribed by Corollary \ref{corollary_differentiability}. We finally come up with 
\begin{align}
    \lim_{N\to\infty}\frac{1}{4N^2}\mathbb{E}\Vert\boldsymbol{\xi}\otimes\boldsymbol{\xi}-\langle\boldsymbol{\sigma}\otimes\boldsymbol{\sigma}\rangle\Vert_F^2=\frac{1}{4}\mathbb{E}^2_\xi[\xi_1^2]-\frac{\bar{x}^2}{2}+\frac{1}{4}\int q^2d\chi^*(\sqrt{\mu},\mu\bar{x};q)\,.
\end{align}

}

\section{Proofs}
The proofs of the results are presented by first introducing the necessary tools like the adaptive interpolation and the differentiability properties of the Parisi pressure. 
\subsection{Tools}\label{tools_section}
In this section we show how to interpolate our model with a simple SK model with random \emph{iid} magnetic fields \cite{interp_guerra_2002} following an adaptive path \cite{adaptive}. The advantage of this approach is the possibility to confine the replica symmetry breaking phenomena in the SK part of the model which is exhaustively studied in the literature \cite{Bolthausen,Guerra_upper_bound,Tala_parisiformula,Arguin,Chatt, Tala_vol1,Tala_vol2,Bolt_conditioning,Chen_parisi_Measures,Chen_uniqueminimizer,panchenko2015sherrington}. The ultimate purpose of the interpolation hereby illustrated is then to linearize the squared magnetization in the Hamiltonian.

The interpolating model is defined by means of the Hamiltonian
\begin{align}\label{interpolating_hamiltonian}
\begin{split}
    H_N(t;\sigma):&= H_N(\sigma;\mu,(1-t)\nu,\lambda+ R_\epsilon(t))=\\&=\sqrt{\mu}H_N^{SK}(\boldsymbol{\sigma})-(1-t)\frac{N\nu}{2}m_N^2(\boldsymbol{\sigma}|\boldsymbol{\xi})-(\lambda+R_\epsilon(t))Nm_N(\boldsymbol{\sigma}|\boldsymbol{\xi})
\end{split}
\end{align}where
\begin{align}
    R_\epsilon(t)=\epsilon+\nu\int_0^tds\,r_\epsilon(s)\,,\quad\epsilon\in[s_N,2s_N]\,,\,s_N= \frac{N^{-\alpha}}{2}
\end{align}with $\alpha\in(0,1/2)$ and where {the interpolating function $r_\epsilon$ will be suitably chosen (see Remark \ref{rem_ODE_regular} below for instance).}
The related interpolating pressure is:
\begin{align}\label{interp_pressure}
\begin{split}
    \Bar{p}_N(t):&= \Bar{p}_N(\mu,(1-t)\nu,\lambda+R_\epsilon(t))=\frac{1}{N}\mathbb{E}_{\boldsymbol{\xi}}\mathbb{E}_\mathbf{Z}\log\sum_{\boldsymbol{\sigma}\in\Sigma_N}\exp\left[
    -H_N(t;\sigma)
    \right]\,.
\end{split}
\end{align}
As done in Proposition \ref{concentration_prop}, the Boltzmann-Gibbs averages relative to \eqref{interpolating_hamiltonian} will be denoted by $\langle\cdot\rangle_{N,R_\epsilon(t)}$.
 
{
\begin{remark}
The interpolation strategy that we use in this work is profoundly different from the typical one of the statistical inference literature within the Bayes optimal setting \cite{Barbier_2019,adaptive}. In that case one interpolates directly at the level of the channel, namely of \eqref{channel}, to compare it with a one body channel of the type $y_i=\sqrt{R_\epsilon(t)}\xi_i+z_i$ with $z_i\iid\mathcal{N}(0,1)$. Traveling along a trajectory that keeps an
inferential interpretation ensures that the model is on the Nishimori line at any $t$ where all the precious properties of that line, identities and correlation inequalities, provide a crucial analytical tool to obtain a finite dimensional variational principle.

In the present case instead the structural complexity of the mismatched setting implies that we cannot count in the very first place on the Nishimori line properties nor on a global absence of fluctuations for the order parameters. The strategy to achieve the solution and overcome this difficulty is to build an interpolation scheme that, albeit not coming from a Gaussian channel of type \eqref{channel}, is able to isolate a pure SK part, described by the Parisi solution, plus a classical one dimensional variational principle.

\end{remark}
} 

\begin{remark}\label{rem_boundedderivatives}
In what follows, we will exploit the fact that the quenched pressure has bounded derivative in the external biases $\lambda$. Indeed, thanks to Cauchy-Schwartz inequality and to $\mathbb{E}[\xi_1^4]<\infty$ we have
\begin{align}\label{boundedness_derivatives}
    \left|\frac{d}{d\lambda}\bar{p}_N(\mu,\nu,\lambda)\right|=|\mathbb{E}\langle m_N\rangle_N|\leq\frac{1}{N}\sum_{i=1}^N|\mathbb{E}\langle\sigma_i\xi_i\rangle_N|\leq\frac{1}{N}\sum_{i=1}^N \sqrt{\mathbb{E}[\xi_i^2]}=\sqrt{\mathbb{E}[\xi_1^2]}=: \sqrt{a}\,.
\end{align}
The previous bound holds for all $\mu,\nu\geq0$ and $\lambda\in\mathbb{R}$. In particular, it holds for $\nu=0$, namely for the SK model where this implies that $\bar{p}_N^{SK}(\cdot,h)$ is Lipschitz in $h$ and so will be its limit $\mathcal{P}(\cdot,h)$.
Furthermore, since the interpolating model \eqref{interpolating_hamiltonian} is of the type \eqref{hamiltonian_WS_mismatched} it inherits these Lipschitz properties on its quenched pressure \eqref{interp_pressure}.
\end{remark}

\begin{proposition}The following sum rule holds:
\begin{align}\label{sum_rule}
    \bar{p}_N(\mu,\nu,\lambda)=\Bar{p}^{SK}_N(\sqrt{\mu},\lambda+R_\epsilon(1))-\frac{\nu}{2}\int_0^1dt[r_\epsilon^2(t)-\Delta_\epsilon(t)]+\mathcal{O}(s_N)
\end{align}where 
\begin{align}
    \Delta_\epsilon(t):=\mathbb{E}\Big\langle
    \left(m_N(\boldsymbol{\sigma}|\boldsymbol{\xi})-r_\epsilon(t)\right)^2
    \Big\rangle_{N,R_\epsilon(t)}\,.
\end{align}

\end{proposition}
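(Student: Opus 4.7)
The plan is to prove the sum rule by writing the difference between the endpoints of the interpolation via the fundamental theorem of calculus, explicitly computing $\frac{d}{dt}\bar{p}_N(t)$, and finally absorbing the mismatch between $\bar{p}_N(\mu,\nu,\lambda+\epsilon)$ and $\bar{p}_N(\mu,\nu,\lambda)$ into the $\mathcal{O}(s_N)$ remainder via the Lipschitz bound from Remark \ref{rem_boundedderivatives}.

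First I would identify the boundary values of the interpolation. At $t=0$, the Hamiltonian $H_N(0;\boldsymbol{\sigma})$ coincides with $H_N(\boldsymbol{\sigma};\mu,\nu,\lambda+\epsilon)$, hence $\bar{p}_N(0)=\bar{p}_N(\mu,\nu,\lambda+\epsilon)$. At $t=1$, the quadratic Mattis term disappears, leaving only the SK part and the linear magnetic field, so $\bar{p}_N(1)=\bar{p}_N^{SK}(\sqrt{\mu},\lambda+R_\epsilon(1))$. Then the fundamental theorem of calculus gives
\begin{align*}
    \bar{p}_N(\mu,\nu,\lambda+\epsilon)=\bar{p}_N^{SK}(\sqrt{\mu},\lambda+R_\epsilon(1))-\int_0^1 dt\,\frac{d}{dt}\bar{p}_N(t)\,.
\end{align*}

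Next I would compute $\frac{d}{dt}\bar{p}_N(t)$. Since $\frac{d}{dt}H_N(t;\boldsymbol{\sigma})=\frac{N\nu}{2}m_N^2-\nu r_\epsilon(t)N m_N$, the standard Boltzmann-Gibbs derivative formula yields
\begin{align*}
    \frac{d}{dt}\bar{p}_N(t)=-\frac{\nu}{2}\mathbb{E}\langle m_N^2\rangle_{N,R_\epsilon(t)}+\nu r_\epsilon(t)\mathbb{E}\langle m_N\rangle_{N,R_\epsilon(t)}\,.
\end{align*}
Expanding the square inside the definition of $\Delta_\epsilon(t)$ gives $\mathbb{E}\langle m_N^2\rangle=\Delta_\epsilon(t)+2r_\epsilon(t)\mathbb{E}\langle m_N\rangle-r_\epsilon^2(t)$; substituting and cancelling the terms linear in $\mathbb{E}\langle m_N\rangle$ produces the clean identity $\frac{d}{dt}\bar{p}_N(t)=\frac{\nu}{2}(r_\epsilon^2(t)-\Delta_\epsilon(t))$.

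Finally, I would account for the initial shift $\lambda\mapsto\lambda+\epsilon$. By Remark \ref{rem_boundedderivatives} the map $\lambda\mapsto\bar{p}_N(\mu,\nu,\lambda)$ is Lipschitz with constant $\sqrt{a}$ uniformly in $N$, so since $\epsilon\in[s_N,2s_N]$ one has $|\bar{p}_N(\mu,\nu,\lambda+\epsilon)-\bar{p}_N(\mu,\nu,\lambda)|\leq 2\sqrt{a}\,s_N=\mathcal{O}(s_N)$. Combining this with the previous display yields exactly the stated sum rule.

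None of these steps is particularly delicate: the derivative computation is a routine Gaussian integration by parts (or direct differentiation in $t$, since the $z_{ij}$'s do not depend on $t$ here), and the control of the boundary shift is immediate from the uniform Lipschitz estimate already established. The only bookkeeping care is to keep track of the fact that $R_\epsilon$ depends on the full trajectory $r_\epsilon(\cdot)$ but that this dependence only enters $\bar{p}_N(1)$ through $R_\epsilon(1)$ and the integrand through $r_\epsilon(t)$, both of which appear explicitly in the sum rule as stated.
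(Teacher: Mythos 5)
Your proposal is correct and follows essentially the same route as the paper: compute $\frac{d}{dt}\bar{p}_N(t)=\frac{\nu}{2}\bigl(r_\epsilon^2(t)-\Delta_\epsilon(t)\bigr)$, identify the endpoints $\bar{p}_N(0)=\bar{p}_N(\mu,\nu,\lambda+\epsilon)$ and $\bar{p}_N(1)=\bar{p}_N^{SK}(\sqrt{\mu},\lambda+R_\epsilon(1))$, apply the fundamental theorem of calculus, and absorb the $\epsilon$-shift via the Lipschitz bound of Remark \ref{rem_boundedderivatives} into $\mathcal{O}(s_N)$. Your explicit expansion of the square in $\Delta_\epsilon(t)$ is just the intermediate step the paper leaves implicit, so there is nothing to add.
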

\begin{proof}Let us begin by computing the first derivative
\begin{align}
    \dot{\Bar{p}}_N(t)=\mathbb{E}\Big\langle
    -\frac{\nu}{2}m_N^2(\boldsymbol{\sigma}|\boldsymbol{\xi})+\nu r_\epsilon(t) m_N(\boldsymbol{\sigma}|\boldsymbol{\xi})
    \Big\rangle_{N,R_\epsilon(t)}=\frac{\nu}{2}r_\epsilon^2(t)-\frac{\nu}{2}\Delta_\epsilon(t)\,.
\end{align}
Remark \ref{rem_boundedderivatives} implies that
\begin{align}
    \Bar{p}_N(0)&=\bar{p}_N(\mu,\nu,\lambda)+\mathcal{O}(s_N)\,;\\
    \begin{split}
        \Bar{p}_N(1)&=\frac{1}{N}\mathbb{E}\log\sum_{\boldsymbol{\sigma}\in\Sigma_N}\exp\left[
        -\sqrt{\mu}H_N^{SK}(\boldsymbol{\sigma}) +(R_\epsilon(1)+\lambda)\sum_{i=1}^N\xi_i\sigma_i
        \right]\,\\&=\bar{p}_N^{SK}(\sqrt{\mu},R_\epsilon(1)+\lambda).
    \end{split}
\end{align}
{An application of the fundamental theorem of calculus} yields the result.
\end{proof}

\begin{remark}\label{rem_ODE_regular}
By looking at the remainder $\Delta_\epsilon(t)$ in the sum rule one may be led to choose the interpolating function as
\begin{align}\label{ODEchoice}
    r_\epsilon(t)=\mathbb{E}\langle m_N(\boldsymbol{\sigma}|\boldsymbol{\xi})\rangle_{N,R_\epsilon(t)}
\end{align} in order to apply Proposition \ref{concentration_prop} in some suitable form and make $\Delta_\epsilon(t)$ vanish in the thermodynamic limit. However, this can cause two issues. First, the extra bias $R_\epsilon(t)$ here introduced is not simply $\epsilon$ as required by Proposition \ref{concentration_prop} but a function of it, so one has to make sure that this does not interfere with the concentration. Second, in \eqref{ODEchoice} $r_\epsilon(\cdot)$ appears implicitly on both sides of the equation. Nevertheless, the choice \eqref{ODEchoice} can be formalized by means of the ODE
\begin{align}\label{ODE_adaptive}
    \dot{R}_\epsilon(t)=\nu\mathbb{E}\langle m_N(\boldsymbol{\sigma}|\boldsymbol{\xi})\rangle_{N,R_\epsilon(t)}=:G_N(t,R_\epsilon(t))\,,\quad R_\epsilon(0)=\epsilon\,,
\end{align}which has always a solution by Cauchy-Lipschitz theorem because the velocity field $G_N$ is Lipschitz for fixed $N$ in the spatial coordinate $R_\epsilon$
\begin{align}\label{der_vfield}
    \frac{\partial}{\partial R_\epsilon}G_N(t,R_\epsilon(t))=\nu N \mathbb{E}\Big\langle (m_N(\boldsymbol{\sigma}|\boldsymbol{\xi})-\langle m_N(\boldsymbol{\sigma}|\boldsymbol{\xi})\rangle_{N,R_\epsilon(t)})^2\Big\rangle_{N,R_\epsilon(t)}\geq0\,.
\end{align} 
Furthermore, by Liouville's formula and the previous equation we know that the Jacobian $\partial_\epsilon R_\epsilon$ satisfies
\begin{align}\label{liouville}
    \frac{\partial R_\epsilon(t)}{\partial\epsilon}=\exp\left\{
    \int_0^tds\,\frac{\partial G_N(s,R_\epsilon(s))}{\partial R_\epsilon}
    \right\}\geq 1\,.
\end{align}
Equations \eqref{ODE_adaptive}, \eqref{der_vfield} and \eqref{liouville} together provide a rigorous justification to the choice \eqref{ODEchoice} and solve the aforementioned issues as it will be clear from the proof of Theorem \ref{main_thm} below.
\end{remark}

{
Let us now turn to Corollary \ref{corollary_differentiability}}. For its proof we need to give more details on $\mathcal{P}(\beta,h)$. Consider the space of atomic probability measures on $[0,1]$, denoted by $\mathcal{M}^d_{[0,1]}$, and define the Parisi functional
\begin{align}\label{Parisi-functional}
    \chi\in\mathcal{M}^d_{[0,1]}\longmapsto\mathcal{P}(\chi;\beta,h)=\log2 +\bar{\Phi}_\chi(0,h;\beta)-\frac{\beta^2}{2}\int_0^1dq\,q\chi([0,q])\,.
\end{align}$\bar{\Phi}_\chi$ is here introduced as an expectation: $\bar{\Phi}_\chi(s,h;\beta)=\mathbb{E}\Phi_{\chi}(s,h\xi;\beta)\,, s\in[0,1]$ where $\Phi_\chi$ solves the final value problem
\begin{align}
    \begin{split}
        &\partial_s\Phi_{\chi}(s,y;\beta)=-\frac{\beta^2}{2}\left(\partial^2_{y}\Phi_\chi(s,y;\beta)+\chi([0,s])(\partial_y\Phi_\chi(s,y;\beta))^2\right)\\
        &\Phi_\chi(1,y;\beta)=\log\cosh y\,.
    \end{split}
\end{align}
It is well known \cite{Guerra_upper_bound,Tala_vol2} that for any $\chi_1,\chi_2\in\mathcal{M}_{[0,1]}^d$
\begin{align}\label{Lipschitz-Parisi}
    |\Phi_{\chi_1}(s,y;\beta)-\Phi_{\chi_2}(s,y;\beta)|\leq\frac{\beta^2}{2}\int_s^1dq\,|\chi_1([0,q])-\chi_2([0,q])|
\end{align}
namely $\chi\mapsto\Phi_\chi$ is Lipschitz in the $L^1([s,1],dq)$ norm. This allows us to extend the functional $\Phi_\chi$ to all the probability measures $\mathcal{M}_{[0,1]}$ with the prescription
\begin{align}
    \Phi_{\chi}:=\lim_{n\to\infty}\Phi_{\chi_n}
\end{align}for any sequence $(\chi_n)_{n\geq 1}$ in $\mathcal{M}_{[0,1]}^d$ such that  $\chi_n\longrightarrow\chi\in\mathcal{M}_{[0,1]}$ weakly.

We hereby collect the continuity and differentiability properties of $\bar{\Phi}_\chi$ and $\mathcal{P}(\chi;\cdot,\cdot)$. 

\begin{proposition}\label{prop_differentiability_PHI}
Let $a:=\mathbb{E}\xi^2$. The following hold:
\begin{enumerate}
    \item[i)] $\bar{\Phi}_\chi$ (and $\mathcal{P}(\chi;\cdot,\cdot)$) can be continuously extended to $\mathcal{M}_{[0,1]}$ w.r.t. the weak convergence and
    \begin{align}
        \bar{\Phi}_\chi(s,h;\beta):=\lim_{n\to\infty}\bar{\Phi}_{\chi_n}(s,h;\beta)=
        \mathbb{E}\Phi_\chi(s,h\xi;\beta)
    \end{align} for any sequence $(\chi_n)_{n\geq 1}$ in 
    $\mathcal{M}_{[0,1]}^d$ such that  $\chi_n\longrightarrow\chi\in\mathcal{M}_{[0,1]}$ weakly.
    \item[ii)] $\chi\mapsto\bar{\Phi}_\chi$ is convex in $\mathcal{M}_{[0,1]}$.
    \item[iii)] $\bar{\Phi}_\chi$ is twice $h$-differentible for any $\chi\in\mathcal{M}_{[0,1]}$ and
    \begin{align}\label{differentiability_PHIBAR}
        |\partial_h\bar{\Phi}_\chi(s,h;\beta)|\leq\sqrt{a}\,,\quad 0<\partial_h^2\bar{\Phi}_\chi(s,h;\beta)\leq a\,.
    \end{align}
    In particular it is convex in $h$.
    \item[iv)] Consider a sequence $(\chi_n)_{n\geq 1}$in $\mathcal{M}_{[0,1]}$ such that $\chi_n\longrightarrow\chi\in\mathcal{M}_{[0,1]}$ weakly. Then
    \begin{align}
        \partial_h\bar{\Phi}_{\chi_n}\longrightarrow\partial_h\bar{\Phi}_{\chi}\,.
    \end{align}
    \item[v)] The function $\mathcal{P}(\beta,h)=\inf_{\chi\in\mathcal{M}_{[0,1]}}\mathcal{P}(\chi;\beta,h)$ is $h$-differentiable at any $h\in\mathbb{R}$ and
    \begin{align}
        \partial_h\mathcal{P}(\beta,h)=\partial_h\mathcal{P}(\chi^*(\beta,h);\beta,h)=\partial_h\bar{\Phi}_{\chi^*(\beta,h)}(0,h;\beta)
    \end{align}where $\chi^*(\beta,h)$ is the unique distribution at which the infimum is attained and only the explicit dependence on $h$ is taken into account.
\end{enumerate}
\end{proposition}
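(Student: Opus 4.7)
The plan is to prove the five items in turn, with (i), (iii), (iv) by routine arguments (dominated convergence, a maximum principle, convex analysis), while (ii) and (v) invoke the known convexity of $\bar{\Phi}_\chi$ in $\chi$ (Panchenko) and uniqueness of the Parisi measure (Auffinger--Chen); the hardest step is (ii). For (i), I would combine the Lipschitz estimate \eqref{Lipschitz-Parisi} with the fact that weak convergence of measures on $[0,1]$ implies pointwise convergence of the cumulative distribution functions $q\mapsto\chi_n([0,q])$ at every continuity point of the limit; uniform boundedness by $1$ and dominated convergence then promote this to $L^1$ convergence of the CDFs, so by \eqref{Lipschitz-Parisi} we get $\Phi_{\chi_n}(s,y;\beta)\to\Phi_\chi(s,y;\beta)$ pointwise in $y$. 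Since $\log\cosh y$ has linear growth, a maximum principle for the PDE gives $|\Phi_\chi(s,y;\beta)|\leq|y|+C(\beta)$ uniformly in $\chi$, so a second application of dominated convergence (legitimate because $\mathbb{E}|\xi|<\infty$) yields $\bar{\Phi}_{\chi_n}\to\bar{\Phi}_\chi$. The extension to $\mathcal{P}(\chi;\cdot,\cdot)$ is immediate because $\int_0^1 q\,\chi([0,q])\,dq$ is weakly continuous in $\chi$.

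For (iii), I would propagate pointwise bounds on derivatives of $\Phi_\chi$ backwards from $s=1$, where $|\partial_y\log\cosh y|=|\tanh y|\leq 1$ and $\partial_y^2\log\cosh y=\operatorname{sech}^2 y\in(0,1]$. Differentiating the defining PDE in $y$ produces linear parabolic equations with zero-order coefficients depending on $\partial_y\Phi_\chi$, and a standard parabolic maximum principle propagates both bounds to all $s\in[0,1]$. Then $\partial_h\bar{\Phi}_\chi(s,h;\beta)=\mathbb{E}[\xi\,\partial_y\Phi_\chi(s,h\xi;\beta)]$ is controlled by $\mathbb{E}|\xi|\leq\sqrt{a}$ via Cauchy--Schwarz, while $\partial_h^2\bar{\Phi}_\chi=\mathbb{E}[\xi^2\,\partial_y^2\Phi_\chi]\in(0,a]$. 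Item (ii) is the main obstacle: I would invoke the convexity of $\chi\mapsto\Phi_\chi$ established on atomic measures via the Ruelle probability cascade representation of $\Phi_\chi$ as a log-moment formula (a Guerra/Aizenman--Sims--Starr argument), and then extend to all of $\mathcal{M}_{[0,1]}$ using the continuity from (i). Item (iv) then follows for free: by (iii) each $\bar{\Phi}_{\chi_n}$ and $\bar{\Phi}_\chi$ is convex and $C^1$ in $h$, and the $\bar{\Phi}_{\chi_n}$ converge pointwise by (i); the classical convex-analysis theorem that pointwise-convergent convex functions have derivatives converging at every point of differentiability of the limit yields $\partial_h\bar{\Phi}_{\chi_n}\to\partial_h\bar{\Phi}_\chi$.

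Finally, $\mathcal{P}(\beta,h)$ is convex in $h$: it is the thermodynamic limit of $\bar{p}_N^{SK}(\beta,h)$, which at finite $N$ is a log-partition function whose Hamiltonian is affine in $h$ and therefore convex in $h$ by Hölder's inequality. Uniqueness of the Parisi minimizer $\chi^*(\beta,h)$ (Auffinger--Chen), combined with the envelope inequality
\[\mathcal{P}(\beta,h)-\mathcal{P}(\beta,h_0)\leq\mathcal{P}(\chi^*(\beta,h_0);\beta,h)-\mathcal{P}(\chi^*(\beta,h_0);\beta,h_0),\]
valid for all $h_0,h\in\mathbb{R}$, and the $h$-differentiability of $\mathcal{P}(\chi;\beta,\cdot)$ at fixed $\chi$ from (iii), pins down a unique subgradient of the convex map $h\mapsto\mathcal{P}(\beta,h)$ at every $h_0$, namely $\partial_h\bar{\Phi}_{\chi^*(\beta,h_0)}(0,h_0;\beta)$. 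This proves (v).
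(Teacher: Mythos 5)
Your proposal is correct, and for items i)--iv) it follows essentially the same route as the paper: i) is the Lipschitz bound \eqref{Lipschitz-Parisi} plus dominated convergence of the CDFs, ii) is an appeal to the known convexity of $\chi\mapsto\Phi_\chi$ (the paper cites Auffinger--Chen's uniqueness paper; note that the actual proof there goes through the stochastic-control/variational representation of the Parisi PDE solution, not a Ruelle-cascade log-moment formula, so your attribution of the mechanism is off even though the invocation is legitimate), iii) is obtained from the pointwise bounds $|\partial_y\Phi_\chi|\leq 1$, $0<\partial_y^2\Phi_\chi\leq 1$ (the paper cites these rather than re-deriving them; your maximum-principle sketch handles the upper bounds, but the strict positivity of $\partial_y^2\Phi_\chi$ needs the quantitative lower bound $C/\cosh^2 y$ or a probabilistic representation, not just a naive comparison), and iv) is the same classical fact about derivatives of pointwise-convergent convex functions. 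The genuine difference is in v): the paper proves single-valuedness of the subdifferential by approximating the minimizer with a sequence $\chi_n\to\chi^*$, sandwiching any subgradient $b$ between difference quotients of $\mathcal{P}(\chi_n;\beta,\cdot)$, Taylor-expanding to second order with the uniform bound $\partial_h^2\bar{\Phi}_{\chi_n}\leq a$ from \eqref{differentiability_PHIBAR}, optimizing $\delta=n^{-1/2}$, and finally using iv) to pass to $\chi^*$; you instead use the envelope (Danskin-type) inequality $\mathcal{P}(\beta,h)-\mathcal{P}(\beta,h_0)\leq\mathcal{P}(\chi^*(\beta,h_0);\beta,h)-\mathcal{P}(\chi^*(\beta,h_0);\beta,h_0)$ together with convexity of $h\mapsto\mathcal{P}(\beta,h)$ and $h$-differentiability of $\mathcal{P}(\chi^*;\beta,\cdot)$ from iii). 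Your argument is cleaner: it needs neither iv) nor the second-derivative bound, only that the infimum is attained (which weak compactness of $\mathcal{M}_{[0,1]}$ and the continuity from i) already give, uniqueness being needed only to speak of \emph{the} minimizer in the statement), and it pins down the unique subgradient as $\partial_h\bar{\Phi}_{\chi^*(\beta,h_0)}(0,h_0;\beta)$ directly. The paper's approximation route buys a proof that never differentiates the functional at the (non-discrete) minimizer itself, but since iii) has already been extended to all of $\mathcal{M}_{[0,1]}$, your shortcut is fully justified.
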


\begin{proof}
\noindent\emph{i).}
Consider $\chi_1,\chi_2\in\mathcal{M}^d_{[0,1]}$. By \eqref{Lipschitz-Parisi}
\begin{align}
    |\bar{\Phi}_{\chi_1}(s,h;\beta)-\bar{\Phi}_{\chi_2}(s,h;\beta)|\leq\frac{\beta^2}{2}\int_s^1dq\,|\chi_1([0,q])-\chi_2([0,q])|
\end{align}
namely $\chi\mapsto\bar{\Phi}_\chi$ is Lipschitz too on $\mathcal{M}^d_{[0,1]}$. Therefore we perform a continuous extension to $\mathcal{M}_{[0,1]}$ obtaining a continuous functional with respect to the weak convergence. Furthermore, given a sequence $(\chi_n)_{n\geq 1}$ converging to $\chi\in\mathcal{M}_{[0,1]}$ weakly we have
\begin{align}
    |\bar{\Phi}_{\chi_n}(s,h;\beta)-\mathbb{E}\Phi_\chi(s,h\xi;\beta)|\leq\frac{\beta^2}{2}\int_s^1dq\,|\chi_n([0,q])-\chi([0,q])|\longrightarrow0
\end{align}by dominated convergence.

\vspace{5pt}

\noindent\emph{ii).} The thesis immediately follows from \emph{i)} and the main result in \cite{Chen_uniqueminimizer} that asserts the convexity of $\Phi_\chi$. 
\vspace{5pt}

\noindent\emph{iii).} By Proposition 2 in \cite{Chen_uniqueminimizer} the first two $y$-derivatives of $\Phi_\chi$ exist and are continuous, with $|\partial_y\Phi_\chi(s,y;\beta)|\leq1$,  $C/\cosh^2y\leq\partial_y^2\Phi_\chi(s,y;\beta)\leq 1$ for some $C>0$. Then, using Lagrange's mean value theorem and dominated convergence one can show that
\begin{align}
    \partial_h\bar{\Phi}_\chi(s,h;\beta)=\mathbb{E}\left[\xi\partial_y\Phi_\chi(s,h\xi;\beta)\right]\,,\quad \partial^2_h\bar{\Phi}_\chi(s,h;\beta)=\mathbb{E}\left[\xi^2\partial^2_y\Phi_\chi(s,h\xi;\beta)\right]
\end{align}which implies \eqref{differentiability_PHIBAR} and in turn the convexity of $\bar{\Phi}_\chi$ in $h$.

\vspace{5pt}

\noindent\emph{iv).} Since $\bar{\Phi}_\eta$ is convex in $h$ for any $\eta\in\mathcal{M}_{[0,1]}$, $\bar{\Phi}_{\chi_n}$ is a sequence of convex functions. Therefore, thanks to points and \emph{i)}, \emph{ii)} and \emph{iii)}
\begin{align}
    \lim_{n\to\infty}\partial_h\bar{\Phi}_{\chi_n}=\partial_h(\lim_{n\to\infty}\bar{\Phi}_{\chi_n})=\partial_h\bar{\Phi}_{\chi}\,.
\end{align}
\vspace{5pt}

\noindent\emph{v).} $\mathcal{P}(\beta,h)$ is convex in $h$ because it is the limit of a sequence of convex functions. Hence it is sufficient to prove that at any $h\in\mathbb{R}$ the sub-differential is single valued (as done for instance in \cite{diff_parisi}). For any fixed $\delta>0$ and $b$ in the sub-differential the following holds
\begin{align}\label{subdiff_1}
    \frac{\mathcal{P}(\beta,h)-\mathcal{P}(\beta,h-\delta)}{\delta}\leq b\leq  \frac{\mathcal{P}(\beta,h+\delta)-\mathcal{P}(\beta,h)}{\delta}\,.
\end{align}
Now, thanks to point \emph{i)} and \emph{ii)}, $\mathcal{P}(\chi;\beta,h)$ is also $\chi$-convex, thus it has a unique minimizer $\chi^*$, and it is continuous w.r.t. the weak convergence. Hence we can find a sequence of measures such that $\chi_n\longrightarrow\chi^*$ weakly and
\begin{align}
    \mathcal{P}(\chi_n;\beta,h)\leq\mathcal{P}(\chi^*;\beta,h)+\frac{1}{n}=\mathcal{P}(\beta,h)+\frac{1}{n}
\end{align}
whilst it is obvious that $\mathcal{P}(\chi_n;\beta,h)\geq \mathcal{P}(\beta,h)$. Inserting these inequalities in \eqref{subdiff_1} produces
\begin{align}
    -\frac{1}{n\delta}+\frac{\mathcal{P}(\chi_n;\beta,h)-\mathcal{P}(\chi_n;\beta,h-\delta)}{\delta}\leq b\leq \frac{1}{n\delta}+\frac{\mathcal{P}(\chi_n;\beta,h+\delta)-\mathcal{P}(\chi_n;\beta,h)}{\delta}\,.
\end{align}
Notice that $\partial_h^{1,2}\mathcal{P}(\chi_n;\beta,h)=\partial_h^{1,2}\bar{\Phi}_{\chi_n}(0,h;\beta)$ hence we can expand the Parisi functional up to the second order obtaining
\begin{align}
    -\frac{1}{n\delta}+\partial_h\mathcal{P}(\chi_n;\beta,h)-\frac{a\delta}{2}\leq b\leq \frac{1}{n\delta}+\partial_h\mathcal{P}(\chi_n;\beta,h)+\frac{a\delta}{2}\,,
\end{align}
where we have used \eqref{differentiability_PHIBAR}. Choose now $\delta=n^{-1/2}$ and then send $n\to\infty$. Finally,  applying point \emph{iv)} we conclude that the unique possible value for $b$ is $\partial_h\bar{\Phi}_{\chi^*}(0,h;\beta)$. 
\end{proof}
It was proved in  \cite{Guerra_upper_bound,Tala_parisiformula,Tala_vol2} that the function $\mathcal{P}$ introduced in point \emph{v)} of the above proposition is indeed the limit of \eqref{P_SK_randomfields}.

\subsection{Proofs of Theorem \ref{main_thm}, Corollary \ref{corollary_differentiability} and Proposition \ref{concentration_prop}}\label{proofs_thm1prop2}
Both results need the $L^2$ convergence of the random pressure towards the limit of its expectations and a preliminary control on the fluctuations of the Mattis magnetization which are respectively contained in the two following lemmas.
\begin{lemma}[Self-averaging of the pressure]\label{L2selfaveraging}If $\mathbb{E}[\xi_1^4]<\infty$ then
\begin{align}\label{L2concentration}
    \mathbb{E}\left[\left(p_N(\mu,\nu,\lambda)-\bar{p}_N(\mu,\nu,\lambda)\right)^2\right]\leq \frac{K(\mu,\nu,\lambda)}{N}\,,\quad  K(\mu,\nu,\lambda)=C_1\mu+C_2\nu^2+C_3\lambda^2
\end{align}with $C_1,C_2,C_3>0$.
\end{lemma}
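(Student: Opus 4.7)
The plan is to split the variance according to the two independent sources of disorder and treat them with different tools adapted to their structure. By the law of total variance,
\begin{align*}
\mathrm{Var}(p_N)=\mathbb{E}_{\boldsymbol\xi}\bigl[\mathrm{Var}_{\mathbf Z}(p_N\mid\boldsymbol\xi)\bigr]+\mathrm{Var}_{\boldsymbol\xi}\bigl(\mathbb{E}_{\mathbf Z}(p_N\mid\boldsymbol\xi)\bigr),
\end{align*}
so it suffices to bound each summand by $O(1/N)$ with the claimed linear dependence in the coupling constants.

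For the inner variance, I would apply the Gaussian Poincar\'e inequality to $p_N$ viewed as a smooth function of the i.i.d. standard Gaussians $\{z_{ij}\}$. A direct computation gives
\begin{align*}
\frac{\partial p_N}{\partial z_{ij}}=\frac{1}{N}\sqrt{\frac{\mu}{2N}}\,\langle\sigma_i\sigma_j\rangle_N,
\end{align*}
hence $\sum_{ij}(\partial_{z_{ij}}p_N)^2\le \mu/(2N)$ uniformly in $\boldsymbol\xi$, and Gaussian Poincar\'e yields $\mathrm{Var}_{\mathbf Z}(p_N\mid\boldsymbol\xi)\le \mu/(2N)$. This takes care of the $C_1\mu$ contribution and only needs the trivial bound $|\langle\sigma_i\sigma_j\rangle_N|\le 1$.

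For the outer variance I would use the Efron--Stein inequality, which does not require any smoothness in $\boldsymbol\xi$ and only costs finitely many moments. Letting $\boldsymbol\xi^{(i)}$ be the configuration obtained from $\boldsymbol\xi$ by replacing $\xi_i$ with an independent copy $\xi_i'$, and using the elementary bound $|\log Z-\log Z'|\le \sup_{\boldsymbol\sigma}|H_N-H_N'|$ between two Gibbs measures with the same state space, I get
\begin{align*}
\bigl|p_N(\boldsymbol\xi)-p_N(\boldsymbol\xi^{(i)})\bigr|\le\frac{\nu}{2N^2}|\xi_i-\xi_i'|\Bigl(2\textstyle\sum_{j}|\xi_j|+|\xi_i-\xi_i'|\Bigr)+\frac{|\lambda|}{N}|\xi_i-\xi_i'|,
\end{align*}
after telescoping the Mattis square $(\sum_j\sigma_j\xi_j)^2-(\sum_j\sigma_j\xi_j^{(i)})^2$ and using $|\sigma_i|=1$. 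Squaring, taking expectations with Cauchy--Schwarz, and using $\mathbb{E}|\xi_i-\xi_i'|^2=2a$, $\mathbb{E}|\xi_i-\xi_i'|^4\le C\,\mathbb{E}\xi_1^4<\infty$ together with $\mathbb{E}(\sum_j|\xi_j|)^2\le N^2 a$, each of the $N$ Efron--Stein terms is of order $(\nu^2+\lambda^2)/N^2$; summing over $i$ produces a bound of the form $(C_2\nu^2+C_3\lambda^2)/N$ on $\mathrm{Var}_{\boldsymbol\xi}(\mathbb{E}_{\mathbf Z}p_N)$.

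Combining the two estimates gives \eqref{L2concentration}. The only delicate point is the Efron--Stein step: since the $\xi_i$'s are unbounded in general, one cannot appeal to the McDiarmid bounded-differences inequality directly, and the Mattis square produces cross terms of the form $\xi_i\sum_j|\xi_j|$ whose second moment is a priori of order $N$. This is precisely where the $\mathbb{E}\xi_1^4<\infty$ hypothesis enters: it ensures that after squaring, taking expectations and summing over $i$, the overall contribution is $O(1/N)$ rather than blowing up. Everything else is routine.
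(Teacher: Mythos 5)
Your argument is correct and is a close variant of the paper's proof. The paper applies the Efron--Stein inequality jointly to the full family $\{z_{ij}\}\cup\{\xi_i\}$ and controls each replacement difference via Lagrange's mean value theorem; you instead first split by the law of total variance, then handle the Gaussian block with the Gaussian Poincar\'e inequality and the $\boldsymbol\xi$ block with Efron--Stein and the crude bound $|\log Z-\log Z'|\le\sup_{\boldsymbol\sigma}|H_N-H_N'|$. The two routes are essentially interchangeable here: for the Gaussian part, Poincar\'e and Efron--Stein give the same $\mu/(2N)$ bound from the uniform derivative estimate $|\partial_{z_{ij}}p_N|\le N^{-1}\sqrt{\mu/(2N)}$, and for the $\boldsymbol\xi$ part both rely on the telescoped Mattis square producing a replacement difference of size $O(N^{-2}|\xi_i-\xi_i'|\sum_j|\xi_j|)$ whose second moment needs $\mathbb{E}\xi_1^4<\infty$. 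One small thing worth making explicit in your cross-term estimate: $\mathbb{E}[(\xi_i-\xi_i')^2(\sum_j|\xi_j|)^2]$ is not a product of independent factors (as $\xi_i$ appears in the sum), so you should either apply Cauchy--Schwarz pushing to fourth moments as you indicate, or isolate the $j=i$ term before using independence; either way the conclusion $(C_2\nu^2+C_3\lambda^2)/N$ holds. Your organizing decomposition via total variance is slightly more modular but buys nothing essential over the paper's joint Efron--Stein; the content is the same.
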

\begin{proof}
The random pressure $p_N$ is a function of the random variables $(\mathbf{Z},\boldsymbol{\xi})$. For this proof we stress this dependency by writing $p_N(\mathbf{Z},\boldsymbol{\xi})$. Define $\mathbf{Z}^{(ij)}=(z_{12},z_{13},\dots, z_{ij}',\dots,z_{N,N-1})$ and $\boldsymbol{\xi}^{(i)}=(\xi_1,\xi_2,\dots,\xi'_i,\dots,\xi_N)$ where $z_{ij}'\sim\mathcal{N}(0,1)$ and $\xi_i'\sim P_\xi$ are independent of anything else. Then, by Efron-Stein inequality
\begin{multline}\label{Efron-Stein}
    \mathbb{E}\left[\left(p_N(\mu,\nu,\lambda)-\bar{p}_N(\mu,\nu,\lambda)\right)^2\right]\equiv \mathbb{V}[p_N(\mathbf{Z},\boldsymbol{\xi})]\leq\\\leq\frac{1}{2}\sum_{i,j=1}^N\mathbb{E}\left[\left(p_N(\mathbf{Z}^{(ij)},\boldsymbol{\xi})-p_N(\mathbf{Z},\boldsymbol{\xi})\right)^2\right]+
    \frac{1}{2}\sum_{i=1}^N\mathbb{E}\left[\left(p_N(\mathbf{Z},\boldsymbol{\xi}^{(i)})-p_N(\mathbf{Z},\boldsymbol{\xi})\right)^2\right]\,.
\end{multline}
Let us focus on the terms in the first sum. By Lagrange's mean value theorem we have that there exists a $\tilde{z}_{ij}\in(\min(z_{ij},z_{ij}'),\max(z_{ij},z_{ij}'))$ such that
\begin{multline}\label{Efronstein_1}
    \left(p_N(\mathbf{Z}^{(ij)},\boldsymbol{\xi})-p_N(\mathbf{Z},\boldsymbol{\xi})\right)^2=\left(\left.\frac{\partial p_N}{\partial z_{ij}}\right|_{\tilde{z}_{ij}}\right)^2(z_{ij}-z_{ij}')^2
    =\\=\left(\frac{1}{N}\sqrt{\frac{\mu}{2N}}\langle\sigma_i\sigma_j\rangle_{N,\tilde{z}_{ij}}\right)^2(z_{ij}-z_{ij}')^2\leq \frac{\mu}{2N^3}(z_{ij}-z_{ij}')^2
\end{multline}where by $\langle\cdot\rangle_{N,\tilde{z}_{ij}}$ we mean the Boltzmann-Gibbs measure where $z_{ij}$ has been replaced with $\tilde{z}_{ij}$ in the Hamiltonian \eqref{hamiltonian_WS_mismatched}. In a really similar fashion we estimate the second set of terms. Again, let $\tilde{\xi}_{i}\in(\min(\xi_i,\xi_i'),\max(\xi_i,\xi_i'))$
\begin{multline}
    \left(p_N(\mathbf{Z},\boldsymbol{\xi}^{(i)})-p_N(\mathbf{Z},\boldsymbol{\xi})\right)^2=\left(\left.\frac{\partial p_N}{\partial \xi_i}\right|_{\tilde{\xi}_i}\right)^2(\xi_i-\xi_i')^2=\\=\left[\frac{\nu}{N^2}\left(\sum_{j\neq i,1}^N\xi_j\langle\sigma_i\sigma_j\rangle_{N,\tilde{\xi}_i}+\tilde{\xi}_i\right)+\frac{\lambda}{N}\langle\sigma_i\rangle_{N,\tilde{\xi}_i}\right]^2(\xi_i-\xi_i')^2=\\=\left[\frac{\nu}{N^2}\left(\sum_{j\neq i,1}^N\xi_j\langle\sigma_i\sigma_j\rangle_{N,\tilde{\xi}_i}+\tilde{\xi}_i\right)+\frac{\lambda}{N^2}\sum_{j=1}^N\langle\sigma_i\rangle_{N,\tilde{\xi}_i}\right]^2(\xi_i-\xi_i')^2\,.
\end{multline}
Notice that in the square bracket we have an overall sum of $2N$ terms. We can use Jensen's inequality to bring the square inside the sums. The last line of the previous is bounded by
\begin{align}
    2N\left[\frac{\nu^2}{N^4}
    \left(
    \sum_{j\neq i,1}^N\xi_j^2\langle\sigma_i\sigma_j\rangle_{N,\tilde{\xi}_i}^2+\tilde{\xi}_i^2
    \right)
        +\frac{\lambda^2}{N^4}\sum_{i=1}^N\langle\sigma_i\rangle_{N,\tilde{\xi}_i}^2
    \right](\xi_i-\xi_i')^2
\end{align}
whence, exploiting the fact that $\tilde{\xi}_i^2\leq\max(\xi_i^2,\xi_i'^2)\leq \xi_i^2+\xi_i'^2$
\begin{align}\label{Efronstein_2}
    \left(p_N(\mathbf{Z},\boldsymbol{\xi}^{(i)})-p_N(\mathbf{Z},\boldsymbol{\xi})\right)^2\leq \frac{2}{N^3}
    \left[\nu^2
    \left(
    \sum_{j=1}^N\xi_j^2+\xi_i'^2
    \right)
    +N\lambda^2
    \right](\xi_i-\xi_i')^2\,.
\end{align}
From the previous equation one can clearly see that $\xi$ appears at most at the 4th power on the r.h.s. Hence, thanks to the hypothesis, inserting the estimates \eqref{Efronstein_1} and \eqref{Efronstein_2} into \eqref{Efron-Stein} we get the claimed inequality.
\end{proof}

\begin{lemma}
Let $y\in[y_1,y_2],\, \delta \in(0,1)$ and denote by $\langle\cdot\rangle_{N,y}$ the Boltzmann-Gibbs expectation associated to the Hamiltonian $H_N(\sigma;\mu,\nu,\lambda+y)$. Then
\begin{align}\label{control_thermal}
    &\mathbb{E}\Big\langle\left(
    m_N(\boldsymbol{\sigma}|\boldsymbol{\xi})-\langle
    m_N(\boldsymbol{\sigma}|\boldsymbol{\xi})
    \rangle_{N,y}\right)^2
    \Big\rangle_{N,y}=\frac{1}{N}\frac{d^2}{dy^2}\bar{p}_N(\mu,\nu,\lambda+y)\\
    \label{control_disordered}
    \begin{split}
    &\mathbb{E}\left[\left(\langle
    m_N(\boldsymbol{\sigma}|\boldsymbol{\xi})
    \rangle_{N,y}-\mathbb{E}\langle
    m_N(\boldsymbol{\sigma}|\boldsymbol{\xi})
    \rangle_{N,y}\right)^2\right]\leq \frac{12 K(\mu,\nu,|\lambda|+|y|+1)}{\delta^2N}+\\&+8\sqrt{a}\frac{d}{dy}\left[\bar{p}_N(\mu,\nu,\lambda+y+\delta)-\bar{p}_N(\mu,\nu,\lambda+y-\delta)\right]
    \end{split}
\end{align}
with $a:=\mathbb{E}\xi_1^2$.
\end{lemma}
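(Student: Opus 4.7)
First I would prove \eqref{control_thermal} by direct differentiation of the pressure. Since $y$ enters the Hamiltonian $H_N(\boldsymbol{\sigma};\mu,\nu,\lambda+y)$ only through the additive term $-Ny\,m_N(\boldsymbol{\sigma}|\boldsymbol{\xi})$, a straightforward computation gives $\frac{d}{dy} p_N(\mu,\nu,\lambda+y) = \langle m_N\rangle_{N,y}$ and $\frac{d^2}{dy^2} p_N(\mu,\nu,\lambda+y) = N \langle (m_N - \langle m_N\rangle_{N,y})^2\rangle_{N,y} \geq 0$. Taking expectation and swapping $\mathbb{E}$ with $d/dy$ (justified since the integrands are uniformly bounded by Remark \ref{rem_boundedderivatives}), then dividing by $N$, yields the first identity. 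In particular both $g(y) := \bar{p}_N(\mu,\nu,\lambda+y)$ and its random counterpart $f(y) := p_N(\mu,\nu,\lambda+y)$ are convex in $y$, which is the property I exploit below.

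For \eqref{control_disordered}, set $h(y) := f(y) - g(y)$, so that the quantity to bound is $\mathbb{E}[(h'(y))^2]$. Convexity of $f$ gives the usual difference-quotient bounds $\delta f'(y) \leq f(y+\delta) - f(y)$ and $\delta f'(y) \geq f(y) - f(y-\delta)$; subtracting $\delta g'(y)$ and separating $g$- and $h$-contributions yields
\[
-u_-(y,\delta) + \frac{h(y) - h(y-\delta)}{\delta} \;\leq\; h'(y) \;\leq\; u_+(y,\delta) + \frac{h(y+\delta) - h(y)}{\delta},
\]
where $u_\pm(y,\delta) \geq 0$ are the forward/backward convexity defects of $g$, and satisfy $u_+ + u_- \leq g'(y+\delta) - g'(y-\delta)$. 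The key trick is the linearization $u_\pm^2 \leq 2\sqrt{a}\,u_\pm$: since $|g'| \leq \sqrt{a}$ by Remark \ref{rem_boundedderivatives}, each $u_\pm$ lies in $[0,2\sqrt{a}]$. Distinguishing the sign of $h'(y)$ and applying the corresponding single-sided bound together with $(a+b)^2 \leq 2a^2 + 2b^2$ gives, in either regime,
\[
(h'(y))^2 \;\leq\; 2u_+^2 + 2u_-^2 + \frac{2(h(y+\delta) - h(y))^2 + 2(h(y) - h(y-\delta))^2}{\delta^2}.
\]

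Taking $\mathbb{E}$, the convexity piece is bounded by $4\sqrt{a}(u_+ + u_-) \leq 4\sqrt{a}[g'(y+\delta) - g'(y-\delta)]$, which is the second term of the claim up to a numerical factor. For the fluctuation piece, I would apply Lemma \ref{L2selfaveraging} at the three shifts $\lambda+y$, $\lambda+y\pm\delta$, combined with $(a-b)^2 \leq 2a^2 + 2b^2$ and the elementary bound $K(\mu,\nu,\lambda+y\pm\delta) \leq K(\mu,\nu,|\lambda|+|y|+1)$ valid for $\delta\in(0,1)$. This yields the first term of the claim, of order $K(\mu,\nu,|\lambda|+|y|+1)/(\delta^2 N)$. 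The precise constants $8\sqrt{a}$ and $12$ in the statement should follow from slightly sharper accounting, for instance using $(h'(y))^2 \leq h'(y)\cdot[u_+ + \delta^{-1}(h(y+\delta)-h(y))]$ on the event $\{h'(y)\geq 0\}$ (and the analogous bound on its complement) to bypass the loose $(a+b)^2$ step.

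The main obstacle is the linearization $u_\pm^2 \leq 2\sqrt{a}\,u_\pm$. Without it, one is stuck with a quadratic convexity remainder of size $(g'(y+\delta) - g'(y-\delta))^2$ that need not vanish as $N \to \infty$ near a point where the limiting pressure is non-differentiable -- precisely the singular regime where we ultimately want to apply the estimate. The a priori $L^\infty$ bound $|g'| \leq \sqrt{a}$ reduces that defect to a first-order quantity, which is integrable against the small $\epsilon$-window of Proposition \ref{concentration_prop} and, after $\epsilon$-averaging, telescopes into a boundary term of size $\mathcal{O}(s_N)$ that vanishes in the thermodynamic limit.
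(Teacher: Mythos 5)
Your proof is correct and follows essentially the same route as the paper. For \eqref{control_thermal}, both you and the paper simply differentiate the quenched pressure twice in $y$ (the only subtlety being the exchange of $\mathbb{E}$ and $d/dy$, which you handle appropriately). For \eqref{control_disordered}, both proofs rely on exactly the same three ingredients: convexity of $p_N$ and $\bar{p}_N$ in $y$, the self-averaging Lemma \ref{L2selfaveraging} at the three shifts $y,\,y\pm\delta$, and the crucial linearization of the quadratic derivative-gap term using the a priori bound $|\partial_y \bar{p}_N| \le \sqrt{a}$ from Remark \ref{rem_boundedderivatives}. The only real difference is presentational: the paper invokes the packaged inequality for derivatives of convex functions (Lemma 3.2 in \cite{panchenko2015sherrington}), squares the resulting four-term sum via Cauchy--Schwarz, and obtains $12$ and $8\sqrt{a}$ on the nose, whereas you re-derive that inequality from scratch through the one-sided convexity defects $u_\pm$ and do the squaring via sign-splitting and $(a+b)^2 \le 2a^2 + 2b^2$, which yields constants $16$ and $4\sqrt{a}$ — a valid bound with the correct scaling in $\delta$ and $N$ but not the stated numbers. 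You flag this yourself, and the suggested sharper accounting via $h'(y)\cdot[\dots]$ would reintroduce a cross term requiring another round of Cauchy--Schwarz, so the cleanest fix is simply to use Panchenko's inequality directly, as the paper does, to recover the stated constants. Either way, the mathematical content is identical, and this numerical discrepancy has no downstream effect on Proposition \ref{concentration_prop} or Theorem \ref{main_thm}.
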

\begin{proof}{ 
The concentration property  \eqref{control_disordered} can be obtained from the self-averaging and the convexity properties of the pressure density, proved in Lemma \ref{L2selfaveraging}, using a well-know argument in spin glass theory \cite{Guerra97GG,GG_contucci_Giardina}. The version of that argument applied here  is analogous to the one appearing in  \cite{Korada}}. In order to lighten the notation we neglect subscripts in the brackets for this proof.
\eqref{control_thermal} follows from a simple computation of the second derivative on the r.h.s. Let us skip directly to \eqref{control_disordered}.
It is easy to see that both $p_N$ and $\bar{p}_N$ are convex in the external biases $\lambda$. We first evaluate the difference
\begin{align}
    \left|\frac{d}{d y}\left[p_N(\mu,\nu,\lambda+ y)-\bar{p}_N(\mu,\nu,\lambda+ y)\right]\right|=\left|\langle m_N\rangle-\mathbb{E}\langle m_N\rangle\right|\,.
\end{align}
The difference between two convex differentiable functions can be bounded (see Lemma 3.2 in \cite{panchenko2015sherrington}) from above as follows
\begin{multline}
    \left|\frac{d}{d y}\left[p_N(\mu,\nu,\lambda+ y)-\bar{p}_N(\mu,\nu,\lambda+ y)\right]\right| \leq 
    \frac{1}{\delta}\sum_{u= y\pm\delta,\,  y}|p_N(\mu,\nu,\lambda +u)-\bar{p}_N(\mu,\nu,\lambda +u)|+\\+\frac{d }{d  y}\left(\bar{p}_N(\mu,\lambda + y+\delta)-\bar{p}_N(\mu,\lambda+ y-\delta)
    \right)
\end{multline}
for any $\delta>0$. For our purposes, it is sufficient to restrict ourselves to $\delta\in(0,1)$. By squaring both sides, averaging w.r.t. the disorder and using Jensen's inequality we get
\begin{multline}\label{squaredpanchenko}
    \mathbb{E}\left[\left(\langle m_N\rangle-\mathbb{E}\langle m_N\rangle\right)^2\right]\leq\frac{4}{\delta^2}\sum_{u= y \pm\delta,\,  y }\mathbb{E}\left[(p_N (\mu,\nu,\lambda+u)-\bar{p}_N (\mu,\nu,\lambda+u)^2\right]+\\
    +4\left[\frac{d }{d  y }\left(\bar{p}_N(\mu,\nu,\lambda + y +\delta)-\bar{p}_N(\mu,\nu,\lambda+ y -\delta)
    \right)\right]^2\,.
\end{multline}
By Lemma \ref{L2selfaveraging}, each of the three terms in the first sum of the previous equation can be bounded by $K(\mu,\nu,|\lambda|+|y|+1)/N$ and this explains the first term in \eqref{control_disordered}. Concerning the second, notice that the derivative in the square brackets is positive thanks to the convexity of $\bar{p}_N$ and bounded as seen in Remark \ref{rem_boundedderivatives}. The previous considerations imply that
\begin{multline}
    \left[\frac{d }{d  y }\left(\bar{p}_N(\mu,\nu,\lambda + y +\delta)-\bar{p}_N(\mu,\nu,\lambda+ y -\delta)
    \right)\right]^2\leq\\\leq 2\sqrt{a}\left[\frac{d }{d  y }\left(\bar{p}_N(\mu,\nu,\lambda + y +\delta)-\bar{p}_N(\mu,\nu,\lambda+ y -\delta)
    \right)\right]\,,
\end{multline}
which concludes the proof.
\end{proof}

We start with Proposition \ref{concentration_prop} that is a direct consequence the previous Lemma.

\begin{proof}[Proof of Proposition \ref{concentration_prop}]\label{proofofprop2}
For future convenience we introduce the notation $\mathbb{E}_\epsilon[\cdot]=\frac{1}{s_N}\int_{s_N}^{2s_N}(\cdot)$. We first decompose the quenched variance
\begin{multline}
    \mathbb{E}\Big\langle
    \left(m_N(\boldsymbol{\sigma}|\boldsymbol{\xi})-\mathbb{E}\langle m_N(\boldsymbol{\sigma}|\boldsymbol{\xi})\rangle_{N,\epsilon}\right)^2
    \Big\rangle_{N,\epsilon}=\mathbb{E}\Big\langle
    \left(m_N(\boldsymbol{\sigma}|\boldsymbol{\xi})-\langle m_N(\boldsymbol{\sigma}|\boldsymbol{\xi})\rangle_{N,\epsilon}\right)^2
    \Big\rangle_{N,\epsilon}+\\+
    \mathbb{E}\left[\left(\langle m_N(\boldsymbol{\sigma}|\boldsymbol{\xi})\rangle_{N,\epsilon}-\mathbb{E}\langle m_N(\boldsymbol{\sigma}|\boldsymbol{\xi})\rangle_{N,\epsilon}\right)^2\right]\,.
\end{multline}
The first term in the r.h.s. of the previous equation is the contribution due to the thermal fluctuations in the model, whilst the second one is due to the disorder.
\vspace{10pt}

\noindent\emph{Thermal fluctuations:} Consider \eqref{control_thermal} with $y\equiv\epsilon\in[s_N,2s_N]$ and take the expectation $\mathbb{E}_\epsilon$ of both sides:
\begin{align}
    \Delta_{T}:=\mathbb{E}_\epsilon\mathbb{E}\Big\langle\left(
    m_N(\boldsymbol{\sigma}|\boldsymbol{\xi})-\langle
    m_N(\boldsymbol{\sigma}|\boldsymbol{\xi})
    \rangle_{N,\epsilon}\right)^2
    \Big\rangle_{N,\epsilon}=\frac{1}{Ns_N}\int_{s_N}^{2s_N}d\epsilon\frac{d^2}{d\epsilon^2}\bar{p}_N(\mu,\nu,\lambda+\epsilon)
\end{align}
Now, recalling that the derivatives of the pressure are bounded (see \eqref{boundedness_derivatives}) we immediately conclude that
\begin{align}\label{DeltaT}
    \Delta_T=\mathcal{O}\left(\frac{1}{Ns_N}\right)\,.
\end{align}

\vspace{10pt}

\noindent\emph{Disorder fluctuations:} Analogously take \eqref{control_disordered} with $y\equiv\epsilon\in[s_N,2s_N]$ and average w.r.t. $\epsilon$ on both sides. Considering that $\epsilon\leq 1$ we have
\begin{multline}
    \Delta_D:=\mathbb{E}_\epsilon\mathbb{E}\left[\left(\langle
    m_N(\boldsymbol{\sigma}|\boldsymbol{\xi})
    \rangle_{N,\epsilon}-\mathbb{E}\langle
    m_N(\boldsymbol{\sigma}|\boldsymbol{\xi})
    \rangle_{N,\epsilon}\right)^2\right]\leq \frac{12 K(\mu,\nu,|\lambda|+2)}{\delta^2N}+\\+\frac{8\sqrt{a}}{s_N}\int_{s_N}^{2s_N}d\epsilon\frac{d}{d\epsilon}\left[\bar{p}_N(\mu,\nu,\lambda+\epsilon+\delta)-\bar{p}_N(\mu,\nu,\lambda+\epsilon-\delta)\right]\,.
\end{multline}
The last integral can be explicitly computed
and then bounded by $4\delta \sqrt{a}$ thanks to Lagrange's mean value theorem and \eqref{boundedness_derivatives}. Hence
\begin{align}
    \Delta_D=\mathcal{O}\left(
    \frac{1}{\delta^2N}+\frac{\delta}{s_N}
    \right)
\end{align}which is optimized when $\delta=(s_N/N)^{1/3}$ (consistently with $\delta\in(0,1)$). This choice leads to
\begin{align}
    \Delta_D=\mathcal{O}\left(
    \frac{1}{s_N^{2/3}N^{1/3}}
    \right)=\mathcal{O}\left(N^{\frac{2\alpha-1}{3}}\right)
    \,.
\end{align}The latter and \eqref{DeltaT} both vanish in the $N\to\infty$ limit for $\alpha\in(0,1/2)$.
\end{proof}

We are finally ready for the proof of Theorem \ref{main_thm}.

\begin{proof}[Proof of Theorem \ref{main_thm}]\label{proof_of_thm_1} The variational principle is proven by means of two bounds  that match in the thermodynamic limit. { The lower bound follows from the  classical sum rule combined with the positivity of the square. The upper bound is obtained with the adaptive interpolation method  (see \cite{Barbier_2019} for a nice introduction to this method).} For the sake of clarity we consider each of them separately and then we prove \eqref{identificazione_Mattis_x}.   
\vspace{10pt}

\noindent\emph{Lower bound:} Let us consider the sum rule \eqref{sum_rule} with the choice $r_\epsilon(t)=x$ constant in $t$. Furthermore observe that the remainder $\Delta_\epsilon(t)$ is always positive, so we discard it at the expense of an inequality:
\begin{align}
    \bar{p}_N(\mu,\nu,\lambda)\geq\bar{p}^{SK}_N(\sqrt{\mu},\lambda+\epsilon+\nu x)-\frac{\nu x^2}{2}+\mathcal{O}(s_N)\,.
\end{align}As explained in Remark \ref{rem_boundedderivatives} $\bar{p}_N^{SK}$ is Lipschitz in its second entry. This allows us to reabsorb the perturbation $\epsilon$ into $\mathcal{O}(s_N)$. By sending $N\to \infty$ one obtains the bound
\begin{align}\label{boundxuniform}
    \liminf_{N\to\infty}\bar{p}_N(\mu,\nu,\lambda)\geq-\frac{\nu x^2}{2}+\mathcal{P}(\sqrt{\mu},\nu x+\lambda)
\end{align}which is uniform in $x$. We can optimize it by taking the $\sup_{x\in\mathbb{R}}$ on the r.h.s.
\vspace{10pt}

\noindent\emph{Upper bound:} From \eqref{control_thermal} we see that any quenched pressure of the type \eqref{quenched_pressure} is convex in its third entry. Then, starting from the sum rule \eqref{sum_rule} we can use Jensen's inequality on the SK quenched pressure to obtain an upper bound
\begin{align}
    \bar{p}_N(\mu,\nu,\lambda)\leq \mathcal{O}(s_N)+\int_{0}^1dt\,\left[
    -\frac{\nu r_\epsilon^2(t)}{2}+\bar{p}_N^{SK}(\sqrt{\mu},\lambda+\epsilon+\nu r_\epsilon(t))
    \right]+\frac{\nu}{2}\int_0^1dt\,\Delta_\epsilon(t)\,.
\end{align}
As done in the lower bound, we throw the dependence on $\epsilon$ in $\bar{p}_N^{SK}$ into $\mathcal{O}(s_N)$ and use Guerra's uniform bound $\bar{p}_N^{SK}\leq\mathcal{P}$ \cite{Guerra_upper_bound}:
\begin{multline}\label{upper_nonepsilonaveraged}
    \bar{p}_N(\mu,\nu,\lambda)\leq\mathcal{O}(s_N)+\int_{0}^1dt\,\left[
    -\frac{\nu r_\epsilon^2(t)}{2}+\mathcal{P}(\sqrt{\mu},\lambda+\nu r_\epsilon(t))
    \right]+\frac{\nu}{2}\int_0^1dt\,\Delta_\epsilon(t)\leq \\
    \leq\mathcal{O}(s_N)+\sup_{x\in\mathbb{R}}\varphi(x;\mu,\nu,\lambda)+\frac{\nu}{2}\int_0^1dt\,\Delta_\epsilon(t)\,.
\end{multline}
The only remaining dependency on the interpolation path is in $\Delta_\epsilon(t)$. To make the two bounds match we have to make sure the remainder vanishes in the limit. Hence, as suggested in Remark \ref{rem_ODE_regular}, we choose $r_\epsilon(\cdot)$ as in \eqref{ODEchoice}. At this point we can decompose $\Delta_\epsilon(t)$ as done in the proof of Proposition \ref{concentration_prop}
\begin{multline}\label{decomposition_fluctuations_thm}
    \Delta_\epsilon(t)=\mathbb{E}\Big\langle
    \left(m_N(\boldsymbol{\sigma}|\boldsymbol{\xi})-\langle m_N(\boldsymbol{\sigma}|\boldsymbol{\xi})\rangle_{N,R_\epsilon(t)}\right)^2
    \Big\rangle_{N,R_\epsilon(t)}+\\+
    \mathbb{E}\left[\left(\langle m_N(\boldsymbol{\sigma}|\boldsymbol{\xi})\rangle_{N,R_\epsilon(t)}-\mathbb{E}\langle m_N(\boldsymbol{\sigma}|\boldsymbol{\xi})\rangle_{N,R_\epsilon(t)}\right)^2\right]\,.
\end{multline}
Let us first bound the $\epsilon$-average of the first term on the r.h.s. Using \eqref{control_thermal} and the inequality \eqref{liouville} on the Jacobian we get
\begin{multline}\label{step0_thermal}
    \frac{1}{s_N}\int_{s_N}^{2s_N}d\epsilon\mathbb{E}\Big\langle
    \left(m_N(\boldsymbol{\sigma}|\boldsymbol{\xi})-\langle m_N(\boldsymbol{\sigma}|\boldsymbol{\xi})\rangle_{N,R_\epsilon(t)}\right)^2
    \Big\rangle_{N,R_\epsilon(t)}=\\=\frac{1}{Ns_N}\int_{s_N}^{2s_N}d\epsilon\left.\frac{d^2}{dy^2}\bar{p}_N(\mu,(1-t)\nu,\lambda+y)\right|_{y=R_\epsilon(t)}\leq\\\leq\frac{1}{Ns_N}\int_{R_{s_N}(t)}^{R_{2s_N}(t)}dy\frac{d^2}{dy^2}\bar{p}_N(\mu,(1-t)\nu,\lambda+y) =\mathcal{O}\left(\frac{1}{Ns_N}\right)
\end{multline}where the last equality follows from the bound on derivatives \eqref{boundedness_derivatives}.

For the second term in the r.h.s. of \eqref{decomposition_fluctuations_thm} we use \eqref{control_disordered} and take its $\epsilon$-average:
\begin{multline}\label{step1_disorder}
    \frac{1}{s_N}\int_{s_N}^{2s_N}d\epsilon\,\mathbb{E}\left[\left(\langle m_N(\boldsymbol{\sigma}|\boldsymbol{\xi})\rangle_{N,R_\epsilon(t)}-\mathbb{E}\langle m_N(\boldsymbol{\sigma}|\boldsymbol{\xi})\rangle_{N,R_\epsilon(t)}\right)^2\right]\leq \mathcal{O}\left(\frac{1}{N\delta^2}\right)+\\+
    \frac{8\sqrt{a}}{s_N}\int_{s_N}^{2s_N}d\epsilon\, \left.\frac{d}{dy}[\bar{p}_N(\mu,(1-t)\nu,\lambda+ y+\delta)-\bar{p}_N(\mu,(1-t)\nu,\lambda+ y-\delta)]\right|_{y=R_\epsilon(t)}\,.
\end{multline}
Now, thanks again to inequality \eqref{liouville} and to the fact that the derivative of the square bracket is positive the integral in the previous equation can be bounded by
\begin{align}\label{bound_intermediate}
    \int_{R_{s_N}(t)}^{R_{2s_N}(t)}dy\, \frac{d}{dy}[\bar{p}_N(\mu,(1-t)\nu,\lambda+ y+\delta)-\bar{p}_N(\mu,(1-t)\nu,\lambda+ y-\delta)]\leq 4\sqrt{a}\delta\,.
\end{align}The last inequality follows from an application of the mean value theorem and \eqref{boundedness_derivatives}. Equations \eqref{step0_thermal}, \eqref{step1_disorder} and \eqref{bound_intermediate} together imply that
\begin{align}
    \mathbb{E}_\epsilon[\Delta_\epsilon(t)]=\mathcal{O}\left(\frac{1}{Ns_N}+\frac{1}{N\delta^2}+\frac{\delta}{s_N}\right)\,,
\end{align}that vanishes in the thermodynamic limit for $\delta=(s_N/N)^{1/3}$ and $s_N=1/2N^{-\alpha}$ with $\alpha\in(0,1/2)$ as seen for Proposition \ref{concentration_prop}. With this information, we take the $\epsilon$-average on both sides of \eqref{upper_nonepsilonaveraged} and by Fubini's Theorem and dominated convergence we have
\begin{align}\label{upperbound}
    \limsup_{N\to\infty}\bar{p}_N(\mu,\nu,\lambda)\leq \sup_{x\in\mathbb{R}}\varphi(x;\mu,\nu,\lambda)\,.
\end{align}
The two bounds, together with Lemma \ref{L2selfaveraging}, conclude the proof of the variational principle \eqref{variational_principle}.
\end{proof}

{
\begin{remark}\label{pancho-Chen}
The upper bound in the proof of \eqref{variational_principle} can also be obtained by adapting the elegant technique used in \cite{Chen_ferromagnetic}. 
We opted instead for a proof that explicitly identifies the physical meaning of the vanishing distance between the upper and lower bounds in terms of the fluctuation of the order parameter. Such crucial thermodynamic property (Proposition \ref{concentration_prop}) holds independently of the solution and it is at the origin of the (ordinary) variational principle in \eqref{variational_principle}.
\end{remark}
}

\begin{remark}\label{finite_supremum_point}
Since for $\nu>0$ $\mathcal{P}(\beta,h)$ is $h$-Lipschitz we have
\begin{align}
    \lim_{|x|\to\infty}\varphi(x;\mu,\nu,\lambda)=-\infty,
\end{align}
therefore the supremum of $\varphi(\,\cdot\,;\mu,\nu,\lambda)$ will be attained at a finite $\bar{x}\in\mathbb{R}$. Furthermore the necessary condition for $\bar{x}$ to be a maximum point is
\begin{align}
    \bar{x}=\left.\partial_h\mathcal{P}(\sqrt{\mu},h)\right|_{h=\nu\bar{x}+\lambda}
\end{align}that in turn implies $\bar{x}\in[-\sqrt{a},\sqrt{a}]$ by \eqref{differentiability_PHIBAR}. Hence one can take the supremum only over $[-\sqrt{a},\sqrt{a}]$. 
\end{remark}

\begin{proof}[{Proof of Corollary \ref{corollary_differentiability}}]\phantom{.}

\vspace{5pt}

\noindent {\emph{$\lambda$-differentiability:}} Set $\Omega(\mu,\nu,\lambda):=\text{argmax}_{[-\sqrt{a},\sqrt{a}]}\,\varphi(\,\cdot\,;\mu,\nu,\lambda)$. Then, since $p(\mu,\nu,\lambda)$ is convex in $\lambda$, by Danskin's theorem (see \cite{Chen_ferromagnetic} for instance) we have that the left and right derivatives satisfy respectively
\begin{align}
    &\frac{d}{d\lambda_-}p(\mu,\nu,\lambda)=\min_{x\in \Omega(\mu,\nu,\lambda)}\frac{\partial}{\partial\lambda}\varphi(x;\mu,\nu,\lambda)=\min_{x\in \Omega(\mu,\nu,\lambda)}\frac{\partial}{\partial h}\left.\mathcal{P}(\sqrt{\mu},h)\right|_{h=\nu x+\lambda}\\ &\frac{d}{d\lambda_+}p(\mu,\nu,\lambda)=\max_{x\in \Omega(\mu,\nu,\lambda)}\frac{\partial}{\partial\lambda}\varphi(x;\mu,\nu,\lambda)=\max_{x\in \Omega(\mu,\nu,\lambda)}\frac{\partial}{\partial h}\left.\mathcal{P}(\sqrt{\mu},h)\right|_{h=\nu x+\lambda}\,.
\end{align}
If $\Omega(\mu,\nu,\lambda)$ is a singleton then $p(\mu,\nu,\lambda)$ is differentiable. Conversely, suppose that $p(\mu,\nu,\lambda)$ is differentiable and that there are at least two distinct values $x_1,x_2\in\Omega(\mu,\nu,\lambda)$, $x_1<x_2$. {Then we have
\begin{multline}
    \frac{d}{d\lambda_-}p(\mu,\nu,\lambda)\leq\frac{\partial}{\partial h}\left.\mathcal{P}(\sqrt{\mu},h)\right|_{h=\nu x_1+\lambda}=x_1<\\<x_2=\frac{\partial}{\partial h}\left.\mathcal{P}(\sqrt{\mu},h)\right|_{h=\nu x_2+\lambda}\leq\frac{d}{d\lambda_+}p(\mu,\nu,\lambda)
\end{multline}}that is a contradiction.

Assume now that there is a unique maximum point $\bar{x}$. Thanks to the convexity of the sequence $\bar{p}_N$ in $\lambda$ and Danskin's theorem we can write 
\begin{align}
\begin{split}
    \lim_{N\to\infty}\mathbb{E}\langle m_N(\boldsymbol{\sigma}|\boldsymbol{\xi})\rangle_{N}&=\lim_{N\to\infty}\frac{d}{d\lambda}\bar{p}_N(\mu,\nu,\lambda)=
    \frac{d}{d\lambda}p(\mu,\nu,\lambda)=\frac{\partial}{\partial\lambda}\varphi(\bar{x};\mu,\nu,\lambda)=\\&=
    \frac{\partial}{\partial\lambda}\mathcal{P}(\sqrt{\mu},\nu\bar{x}+\lambda)=\frac{\partial}{\partial h}\left.\mathcal{P}(\sqrt{\mu},h)\right|_{h=\nu\bar{x}+\lambda}=\bar{x}\,,
\end{split}
\end{align}where it is understood that only explicit dependence on $\lambda$ is taken into account when the partial derivative is taken.

\vspace{3pt}

{
\noindent \emph{$\nu$-differentiability:} The proof relies on Danskin's theorem and is a straightforward consequence of that of Proposition 2 in \cite{Chen_ferromagnetic}.

\vspace{3pt}

\noindent \emph{$\mu$-differentiability at $\lambda=0$:} Notice that when $\xi$ is centered then $\varphi(x;\mu,\nu,0)$ is symmetric in $x$. The result then follows easily again from Danskin's Theorem (as in Theorem 2 in \cite{Chen_ferromagnetic}) and from the differentiability properties w.r.t. $\beta=\sqrt{\mu}$ of the Parisi pressure in Theorem 14.11.6 of \cite{Tala_vol2} and \cite{diff_parisi}.

}

\end{proof}

\subsection{Proof of Proposition  \ref{properties_phiRS}  and Proposition \ref{prop: ATline}}\label{proofsprop3-4}

\begin{proof}[Proof of Proposition \ref{properties_phiRS}]

The fact that  $\varphi_{RS}$ is an even function of $x$ follows directly from the symmetry of the random variable $\xi$. By  Remark \ref{finite_supremum_point}, when $|x|\to\infty$, the term $-\mu x^2/2$ is  dominant in \eqref{RSpotential_gaussground} bringing $\varphi_{RS}$ to $-\infty$. As a consequence the maximum point(s) of $\varphi_{RS}$ are critical point(s). The vanishing derivative condition yields
\begin{align}\label{consistencyx_RS_gaussground}
\begin{split}
        &\frac{d\varphi_{RS}}{dx}=-\mu x+\mu\mathbb{E}\xi\tanh\left(z\sqrt{\mu q(x,\mu,a)}
    +\mu\xi x\right)=-\mu x+\mu^2 ax\left(1-q(x,\mu,a)\right)=0
\end{split}
\end{align}that is
\begin{align}
    x=0\quad\text{or}\quad q(x,\mu,a)=1-\frac{1}{\mu a}\,.
\end{align}
Since the function is $q(x,\mu,a)$ increasing for $x \geq0$, the positive solution $\bar{x}(\mu,a)$ of \eqref{equationforx_RS}
exists and is unique up to reflection if and only if    \begin{equation}
\lim_{x\to 0^+} q(x,\mu,a)\leq1-\frac{1}{\mu a} 
\end{equation}
which is equivalent to \eqref{conditionforbarx}. 

Consider now $a\geq 1$. If $1/a\leq\mu\leq 1$ we have $q(0,\mu,0)=0$, thus \eqref{conditionforbarx} is clearly satified. Furthermore, it turns out that
\begin{align}\label{AT_consideration}
    \mu>1\quad\Rightarrow\quad AT(\mu,0)>1\,.
\end{align}
In fact for $a=0$ \eqref{hamiltonian_WS_mismatched} reduces to an SK model with zero external magnetic field at temperature $\sqrt{\mu}$. Fix $\mu>1$ and assume that $AT(\mu,0)\leq1$. Then for any $\epsilon>0$ by the monotonicity of $q(x,\mu,a)$ 
\begin{align}
    \mu\mathbb{E}\cosh^{-4}\left(z\sqrt{\mu q(\epsilon,\mu,1)+\epsilon^2\mu^2}\right)<AT(\mu,0)\leq1\,.
\end{align}
\cite{ChenATsharp} implies that the Parisi measure is $\chi^*(\sqrt{\mu},\epsilon\mu)=\delta_{q(\epsilon,\mu,1)}$ and $\chi^*(\sqrt{\mu},\epsilon\mu)\longrightarrow\delta_{q(0,\mu,0)}$ weakly. Since $\mathcal{P}(\beta,h)$ is continuous in $h$ and the Parisi functional $\mathcal{P}(\chi;\beta,h)$ is weakly continuous we have that
\begin{align}
    \mathcal{P}(\sqrt{\mu},0)=\lim_{\epsilon\to0}\mathcal{P}(\sqrt{\mu},\epsilon\mu)=\mathcal{P}(\delta_{q(0,\mu,0)};\sqrt{\mu},0)\,.
\end{align}
However for $\mu>1$ we have $\mathcal{P}(\sqrt{\mu},0)<\mathcal{P}(\delta_{q(0,\mu,0)};\sqrt{\mu},0)$ thus the latter is a contradiction coming from the assumption $AT(\mu,0)\leq 1$. This proves \eqref{AT_consideration}. Hence
\begin{align}
    1<\mu\mathbb{E}\cosh^{-4}(z\sqrt{\mu q(0,\mu,0)})\leq\mu\mathbb{E}\cosh^{-2}(z\sqrt{\mu q(0,\mu,0)})= \mu (1-q(0,\mu,0))
\end{align}
from which, when $a\geq 1$,
\begin{align}\label{cricondition}
    q(0,\mu,0)< 1-\frac{1}{\mu}\leq 1-\frac{1}{\mu a}\,.
\end{align}

Finally, the solution to  \eqref{equationforx_RS} is  stable  w.r.t. the optimization, indeed
\begin{multline}
    \left.\frac{d^2\varphi_{RS}(x;\mu,a)}{dx^2}\right|_{x= \bar{x}(\mu,a)}=-\mu+\mu^2a\left(1-q(x,\mu,a)\right)-\mu^2a\bar{x}(\mu,a)\frac{dq}{dx}(\bar{x}(\mu,a),\mu)=\\=-\mu^2a\bar{x}(\mu,a)\frac{dq}{dx}(\bar{x}(\mu,a),\mu)<0
\end{multline}thanks to the monotonicity of $q(x,\mu,a)$. The result for $x=-\bar{x}(\mu,a)$ follows by symmetry. 
\end{proof}
\begin{proof}[Proof of Proposition \ref{prop: ATline}] 

By proposition \ref{properties_phiRS} there exists a unique (non negative) maximum point $\bar{x}(\mu,a)$ of $\varphi_{RS}(x;\mu,a)$. Given $(\mu,a)\in\R_{\geq 0}\times\R_{\geq0}$ we introduce  the following subset of the real line:
\begin{align}
    RS(\mu,a)=\left\{
        x\in\mathbb{R}\,|\,\mu\,\mathbb{E}\cosh^{-4}\left(z\sqrt{\mu\, q(x,\mu,a)+\mu^2 x^2a}\right)\leq1
    \right\}\,.
\end{align}

Clearly by definition $AT(\mu,a)\leq1 \iff \bar{x}(\mu,a)\in RS(\mu,a)$. We start assuming that $\bar{x}(\mu,a)\in RS(\mu,a)$. Let us denote by  $\varphi(x;\mu,a)$ the variational potential \eqref{RSBpotential} specialized in the current setting, namely with $\nu=\mu,\,\lambda=0$ and $\xi\sim \mathcal{N}(0,a)$.
From \eqref{boundxuniform} we already know that  \begin{align}\label{boundxuni}
    \liminf_{N\to\infty}\bar{p}_N(\mu,a)\geq 
    \varphi(x;\mu,a)
\end{align} uniformly on $x$. Hence we can optimize \eqref{boundxuni} only over the region $RS(\mu,a)$ obtaining the lower bound:
\begin{align}\label{boundRSBd}
    \liminf_{N\to\infty}\bar{p}_N(\mu,a)\geq \sup_{x\in RS(\mu,a)}\varphi(x;\mu,a)\,.
\end{align}

The choice of restricting the supremum to the region $RS(\mu,a)$
allows us to replace in \eqref{boundRSBd} the function $\varphi$ with its its replica symmetric version $\varphi_{RS}$.
Indeed again by \cite{ChenATsharp} the AT condition is  sufficient for the validity of the the replica symmetric solution of the SK model. Then from \eqref{boundRSBd} one gets the lower bound 

\begin{align}\label{boundRSBd2}
    \liminf_{N\to\infty}\bar{p}_N(\mu,a)\geq \sup_{x\in RS(\mu,a)}\varphi_{RS}(x;\mu,a)\,.
\end{align}

For the upper
bound we can exploit the fact that the pressure of the SK model is always bounded from above by the replica symmetric one \cite{Guerra_upper_bound}. Hence from the upper bound \eqref{upperbound}  
we  get
\begin{align}
    \limsup_{N\to\infty}\bar{p}_N(\mu,a)\leq\sup_{x}\varphi_{RS}(x;\mu,a)=\sup_{x\in RS(\mu,a)}\varphi_{RS}(x;\mu,a)\,.
\end{align}
where the last equality follows from the assumption $\bar{x}(\mu,a)\in RS(\mu,a)$. Summarising we just proved that
\begin{equation}
AT(\mu,a)\leq1\; \Longrightarrow\;
\lim_{N\to\infty}\bar{p}_N(\mu,a)=\sup_{x\in RS(\mu,a)}\varphi_{RS}(x;\mu,a)\,.
\end{equation}
Notice that in the previous equality the supremum  can be taken on the whole real line since we are assuming that $\bar{x}(\mu,a)\in RS(\mu,a)$. 

Conversely, suppose that
$\bar{x}(\mu,a)\in (RS(\mu,a))^c$. We are going to prove the replica symmetric solution cannot hold. By Theorem \ref{main_thm} we know that
\begin{equation}\label{varformula}
\lim_{N\to\infty}\bar{p}_N(\mu,a)=\sup_{x\in \R}\varphi(x;\mu,a)\,=\varphi(\tilde{x}(\mu,a);\mu,a).
\end{equation}    
where $\tilde{x}(\mu,a)$ denotes a point where  the supremum  is attained. By Remark \ref{finite_supremum_point} one can say  that $\tilde{x}(\mu,a)\in[-\sqrt{a},\sqrt{a}]$. Let's consider two cases, first suppose that $\tilde{x}(\mu,a)\in RS(\mu,a)$, then
using the result in \cite{ChenATsharp} we have that
\begin{equation}
\varphi(\tilde{x}(\mu,a);\mu,a)= \varphi_{RS}(\tilde{x}(\mu,a);\mu,a)< \sup_{x \in \R}\varphi_{RS}(x;\mu,a)
\end{equation}
where the last inequality follows from the assumption $\bar{x}(\mu,a)\in (RS(\mu,a))^c$. On the other hand if 
$\tilde{x}(\mu,a)\in (RS(\mu,a))^{c}$
it is known \cite{ToninelliAT} that the pressure of the SK model is strictly smaller that its replica symmetric version, therefore  

\begin{equation}
\varphi(\tilde{x}(\mu,a);\mu,a) <  \varphi_{RS}(\tilde{x}(\mu,a);\mu,a)\leq \sup_{x \in \R}\varphi_{RS}(x;\mu,a)\,.
\end{equation}
In conclusion, we have just proved that
\begin{align}
    AT(\mu,a)>1\quad\Rightarrow\quad \lim_{N\to\infty}\bar{p}_N(\mu,a)\,<\,\sup_{x\in\mathbb{R}}\varphi_{RS}(x;\mu,a)\,.
\end{align}

\end{proof}

\section{Phase Diagram}\label{Phase_diagram_section}
This section collects the consequences of Propositions \ref{properties_phiRS} and \ref{prop: ATline} and resumes how the phase diagram in \figurename\,\ref{phase_diagram} is drawn.

When $a\leq1$ the condition \eqref{conditionforbarx} is not trivial and identifies a curve that lies above $\mu=1/a$. Below this curve, for $\mu> 1$, the unique stable maximizer of $\varphi_{RS}$ is $\bar{x}(\mu,a)=0$. The resulting $q(0,\mu,a)\equiv q(0,\mu,0)$ has to be intended as the stable solution to the consistency equation for the overlap of an SK model at temperature $\sqrt{\mu}$ in absence of external magnetic field, which is known to be RSB for $\mu>1$. Hence by \eqref{AT_consideration} 
\begin{align}\label{temp0}
    AT(\mu,a)=AT(\mu,0)=\mu\mathbb{E}\cosh^{-4}(z\sqrt{\mu q(0,\mu,0)})>1\,.
\end{align}
This in turn implies the replica symmetry breaking in our model. The de Almeida-Thouless red line in the diagram represents the condition $AT(\mu,a)=1$ and must lie above, or at most coincide with, the curve \eqref{conditionforbarx} since it must contain the entire RSB phase. The red region could contain a mixed phase in analogy with the SK model as explained in Remark \ref{analogy_SKmodel}.

From \eqref{temp0} it is also clear that in an RS phase we must have $\bar{x}(\mu,a)\neq0$ for $\mu>1$ otherwise $AT(\mu,a)>1$. Similarly, for $a\geq 1$ and $1/a< \mu\leq 1$, $\bar{x}(\mu,a)=0$ cannot be the solution to \eqref{equationforx_RS} either since 
\begin{align}
    q(0,\mu,a)=q(0,\mu,0)=0<1-\frac{1}{\mu a}\,.
\end{align}
Contrarily, in the green region, that is replica symmetric by Proposition \ref{prop: ATline}, the unique possible maximizer is $\bar{x}(\mu,a)=0$ because $\mu\leq1/a$. Moreover, we have the following

\begin{corollary}[of Proposition \ref{prop: ATline}]\label{RScondition}
The model is always replica symmetric for any $a\geq 1$.
\end{corollary}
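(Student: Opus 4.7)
The plan is to invoke the sharp criterion of Proposition \ref{prop: ATline}, so the task reduces to verifying $AT(\mu,a)\leq 1$ for every $\mu\geq 0$ whenever $a\geq 1$. I would split the argument according to the two possible regimes for the optimizer $\bar{x}(\mu,a)$ identified in Proposition \ref{properties_phiRS}, namely $\bar{x}(\mu,a)=0$ versus $\bar{x}(\mu,a)\neq 0$, which are discriminated by the sign of $1-\frac{1}{\mu a}$.

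In the low-temperature-of-Mattis regime $\mu\leq 1/a$, the only stable maximizer is $\bar{x}(\mu,a)=0$ and $q(0,\mu,a)=q(0,\mu,0)=0$ (since $\mu a\leq 1\leq 1$ places the underlying SK term in its high-temperature phase). Therefore
\begin{align}
    AT(\mu,a)=\mu\,\mathbb{E}\cosh^{-4}(0)=\mu\leq\frac{1}{a}\leq 1,
\end{align}
which is the first case. The second case is $\mu>1/a$, where Proposition \ref{properties_phiRS} guarantees that $\bar{x}(\mu,a)\neq 0$ and satisfies \eqref{equationforx_RS}, i.e. $q(\bar{x}(\mu,a),\mu,a)=1-\frac{1}{\mu a}$.

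The crucial observation for this second case is that the consistency equation \eqref{consistencyq_RS_gaussground} can be rewritten as
\begin{align}
    1-q(\bar{x}(\mu,a),\mu,a)=\mathbb{E}\cosh^{-2}\!\left(z\sqrt{\mu q(\bar{x}(\mu,a),\mu,a)}+\mu\xi\bar{x}(\mu,a)\right),
\end{align}
using $1-\tanh^2=\cosh^{-2}$. Combining this identity with the elementary inequality $\cosh^{-4}\leq\cosh^{-2}$ (valid since $\cosh\geq 1$), I obtain
\begin{align}
    AT(\mu,a)\leq \mu\left(1-q(\bar{x}(\mu,a),\mu,a)\right)=\mu\cdot\frac{1}{\mu a}=\frac{1}{a}\leq 1.
\end{align}
Both regimes therefore produce the same bound $AT(\mu,a)\leq 1/a\leq 1$, and Proposition \ref{prop: ATline} yields the replica symmetry of the model.

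I do not expect any genuine obstacle in this argument: the whole corollary is in essence a one-line calculation that uses the defining equation for $\bar{x}(\mu,a)$ to cancel the factor of $\mu$ in $AT(\mu,a)$. The only technical care is to treat the degenerate regime $\mu\leq 1/a$ separately, because there \eqref{equationforx_RS} has no positive solution and one must use instead the explicit value $\bar{x}(\mu,a)=0$ together with the fact that $q(0,\mu,0)=0$ in that range of temperatures for the SK model.
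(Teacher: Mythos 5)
Your proposal is correct and follows essentially the same route as the paper: the heart of both arguments is the bound $AT(\mu,a)\leq\mu\,\mathbb{E}\cosh^{-2}(\cdot)=\mu\bigl(1-q(\bar{x},\mu,a)\bigr)=1/a\leq 1$ obtained from $\cosh^{-4}\leq\cosh^{-2}$ and the consistency equation together with \eqref{equationforx_RS}. The only (inessential) difference is the case split — the paper treats $\mu\leq 1$ by the trivial bound $AT\leq\mu$ and uses \eqref{equationforx_RS} for $\mu\geq1\geq 1/a$, whereas you split at $\mu\lessgtr 1/a$ — but both branches reduce to the same two elementary estimates.
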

\begin{proof}
Recall that for $\mu\leq 1$ one has trivially $AT(\mu,a)\leq 1$. In addition to that, thanks to Proposition \ref{properties_phiRS} for $a\geq1$ and $\mu\geq1\geq 1/a$ we can always assume \eqref{equationforx_RS}. Hence
\begin{multline}
    AT(\mu,a)\leq\mu\mathbb{E}\cosh^{-2}\left(z\sqrt{\mu q(\bar{x}(\mu,a),\mu,a)+\mu^2 \bar{x}(\mu,a)^2a}\right)=\\=\mu\left[1-q(\bar{x}(\mu,a),\mu,a)\right]=\mu-\mu+\frac{1}{a}=\frac{1}{a}\leq 1\,.
\end{multline}
The thesis follows from Proposition \ref{prop: ATline}.
\end{proof}

\begin{remark}\label{analogy_SKmodel}
Let us consider $P_\xi=(\delta_{\sqrt{a}}+\delta_{-\sqrt{a}})/2$, or equivalently $\xi_i=\sqrt{a}\tau_i$ with $\tau_i=\pm1$. In this case, one can gauge away the signs of the variables $\xi_i$'s in \eqref{hamiltonian_WS_mismatched} by means of the $\mathbb{Z}_2$ gauge transformation
\begin{align}
    \label{Z2gauge}
    z_{ij}\mapsto z_{ij}\tau_i\tau_j\,,\quad \sigma_i\mapsto\sigma_i\tau_i
\end{align}obtaining the Hamiltonian
\begin{align}\label{reparam_SK}
    \tilde{H}_N(\boldsymbol{\sigma})=-\sum_{i,j=1}^N\left(z_{ij}\sqrt{\frac{\mu}{2N}}\sigma_i\sigma_j+\frac{\mu a}{2N}\sigma_i\sigma_j\right)\distreq-\sum_{i,j=1}^NJ_{ij}\sigma_i\sigma_j\,,\quad J_{ij}\iid\mathcal{N}\left(\frac{\mu a}{2N},\frac{\mu}{2N}\right)\,.
\end{align}
The latter describes an SK model with a peculiar parameterization. To see this it suffices to consider the parameterization \cite{nishimori01}, namely
\begin{align}
    \beta H_N^{SK}(\boldsymbol{\sigma})=-\sum_{i,j=1}^NJ_{ij}\sigma_i\sigma_j\,,\quad J_{ij}\iid\mathcal{N}\left(\frac{\beta J_0}{2N},\frac{\beta^2J^2}{2N}\right)
\end{align}and to identify $1/\beta J=T/J=1/\sqrt{\mu}$ and $J_0/J=\sqrt{\mu}a$. This means that if we draw the phase diagram of the model \eqref{reparam_SK} with $\sqrt{\mu}a$ and $1/\sqrt{\mu}$ on the $x$ and $y$ axes respectively we re-obtain the well known phase diagram of the SK model. In this diagram for instance the curves for fixed $a$ are a family of hyperbolas, and among them $a=1$ corresponds to the Nishimori line. It is then a simple exercise to show that the phase diagram of the SK model redrawn in the parameterization \eqref{reparam_SK} is qualitatively similar to the one in \figurename\,\ref{phase_diagram}, meaning that the same phases are disposed in the same positions. In particular the Nishimori line is the vertical line $a=1$.

We finally notice that the model studied in \cite{Chen_ferromagnetic} can be seen as a special inference problem in a non-optimal setting where the receiver uses his own Rademacher guess to retrieve a binary signal of which he does not know the amplitude.
\end{remark}

We conclude the analysis with the study of the behavior of the solution $\bar{x}(\mu,a)$ of the variational problem \eqref{variationRS} around the critical point $(\mu,a)=(1,1)$.
By Proposition \ref{properties_phiRS}
we have that $\lim_{(\mu,a)\to(1,1)} \bar{x}(\mu,a)=0$. Notice that the replica symmetric solution $\bar{x}(\mu,a)$ represents the limiting behaviour of the Mattis magnetization when $AT(\mu,a)\leq 1$ and it is not identically vanishing iff condition \eqref{conditionforbarx} is satisfied. By Proposition \ref{properties_phiRS} and Corollary \ref{RScondition} the above conditions are always satisfied if $\mu a\geq1$ and $a\geq1$. Then it holds

\begin{proposition}

Assuming  that  $\mu a\geq1$ and $a\geq1$ then $\bar{x}(\mu,a)$ is the unique (up to reflection) solution of
\begin{equation}\label{criconseq}
\E\, \tanh^2\left(Y(x,\mu,a)\right)\,=\,1-\frac{1}{\mu a },\;\;Y(x,\mu,a)= z\sqrt{\mu-\frac{1}{a}}+\mu x \xi  \end{equation}
where $z\sim\mathcal{N}(0,1)$, $\xi\sim\mathcal{N}(0,a)$ are independent Gaussian. Moreover for $(\mu,a)\to (1,1)$ we have
\begin{equation}\label{espansione}
\left(\bar{x}(\mu,a)\right)^2\,=\, \dfrac{(\mu-\frac{1}{a})\left[\frac{1}{\mu}-1+2(\mu-\frac{1}{a})(1+o(1))\right]
}{t(\mu,a)(1+o(\bar{x}(\mu,a)))}
\end{equation}
where  $t(\mu,a)=\mu^2a \E\left(2-\cosh\left(2z\sqrt{\mu-\frac{1}{a}}\right)\right)\,\cosh^{-4}\left(z\sqrt{\mu-\frac{1}{a}}\right)$ .
\end{proposition}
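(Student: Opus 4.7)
The plan is to first reduce the coupled consistency equations \eqref{consistencyq_RS_gaussground}--\eqref{equationforx_RS} of Proposition \ref{properties_phiRS} to the single equation \eqref{criconseq}, and then Taylor-expand that equation around $(\mu,a)=(1,1)$. For the reduction, one uses the algebraic identity $1-\frac{1}{\mu a} = \frac{\mu-1/a}{\mu}$, which substituted in \eqref{consistencyq_RS_gaussground} with $q(\bar{x},\mu,a) = 1-\frac{1}{\mu a}$ (cf.\ \eqref{equationforx_RS}) turns the argument $z\sqrt{\mu q} + \mu\xi\bar{x}$ into $z\sqrt{\mu-1/a} + \mu\xi\bar{x} = Y(\bar{x},\mu,a)$, yielding \eqref{criconseq}. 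Uniqueness of $\bar{x}(\mu,a)$ up to reflection is inherited from Proposition \ref{properties_phiRS}, since the hypothesis \eqref{conditionforbarx} is automatically fulfilled whenever $a\geq 1$ and $\mu a\geq 1$.

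For the asymptotic \eqref{espansione}, set $\epsilon = \mu - 1/a$ and $Z = z\sqrt{\epsilon}$, so that \eqref{criconseq} reads $\E\tanh^2(Z + \mu\bar{x}\xi) = \epsilon/\mu$. The next step is a fourth-order Taylor expansion of $y\mapsto\tanh^2(Z+y)$ around $y=0$. Since $\xi\sim\mathcal{N}(0,a)$ is centered, the odd-moment contributions vanish, and using the identity $(\tanh^2)''(y) = 2\cosh^{-2}y\,(1-3\tanh^2 y)$ together with the universal bound on $|(\tanh^2)^{(4)}|$ one obtains
\begin{equation*}
\E_\xi\tanh^2(Z+\mu\bar{x}\xi) = \tanh^2 Z + \mu^2 a\,\bar{x}^2\,\cosh^{-2}Z\,(1-3\tanh^2 Z) + O(\bar{x}^4),
\end{equation*}
uniformly in $Z$ and in $(\mu,a)$ near $(1,1)$. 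The algebraic simplification $\cosh^{-2}Z\,(1-3\tanh^2 Z) = 3\cosh^{-4}Z - 2\cosh^{-2}Z = (2-\cosh 2Z)\cosh^{-4}Z$ then identifies, after averaging in $z$, the $\bar{x}^2$ coefficient exactly with $t(\mu,a)/(\mu^2 a)$, so that
\begin{equation*}
\E\tanh^2(Z+\mu\bar{x}\xi) = \E\tanh^2 Z + t(\mu,a)\,\bar{x}^2 + O(\bar{x}^4).
\end{equation*}

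Independently, a direct Taylor expansion $\tanh^2 y = y^2 - \tfrac{2}{3}y^4 + O(y^6)$ combined with $\E z^2 = 1$ and $\E z^4 = 3$ gives $\E\tanh^2(z\sqrt{\epsilon}) = \epsilon - 2\epsilon^2 + O(\epsilon^3)$. Inserting both expansions into $\E\tanh^2 Y = \epsilon/\mu$ and rearranging yields
\begin{equation*}
t(\mu,a)\bar{x}^2 + O(\bar{x}^4) = \epsilon\Bigl[\tfrac{1}{\mu}-1 + 2\epsilon(1+o(1))\Bigr],
\end{equation*}
valid as $(\mu,a)\to(1,1)$. Since $t(1,1)=1$, one can divide by $t(\mu,a)$ in a neighborhood of $(1,1)$, and the $O(\bar{x}^4) = \bar{x}^2\cdot O(\bar{x}^2)$ remainder is absorbed into a factor $1+o(\bar{x})$ in the denominator, producing \eqref{espansione}.

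The main obstacle is the self-consistent bookkeeping of error terms: both the statement $\bar{x}\to 0$ and the absorption of $O(\bar{x}^4)$ as $o(\bar{x})\cdot\bar{x}^2$ rely on an \emph{a priori} quantitative relation between $\bar{x}$ and $\epsilon$ that must be extracted from the very formula one is trying to prove. A standard bootstrap would first establish $\bar{x}^2 = O\bigl(\epsilon\cdot\max(|\tfrac{1}{\mu}-1|,\epsilon)\bigr)$ from the dominant terms of the expansion, and then feed this back into the remainder estimate to close the argument. One should also verify that the right-hand side of \eqref{espansione} is non-negative in the regime $a\geq 1$, $\mu a\geq 1$ — which reduces to checking $\tfrac{1}{\mu}-1 + 2(\mu-1/a)\geq 0$ there, an elementary two-variable computation near $(1,1)$ — guaranteeing that the formula is compatible with the required positivity of $\bar{x}^2$.
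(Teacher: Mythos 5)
Your proposal is correct and follows essentially the same route as the paper: reduce to \eqref{criconseq} via Proposition \ref{properties_phiRS} (with uniqueness from \eqref{conditionforbarx} under $\mu a\geq 1$, $a\geq 1$), Taylor-expand in $x$ to second order with a uniformly bounded fourth-order remainder killing the odd terms by the symmetry of $\xi$, identify the coefficient with $t(\mu,a)$, expand $\E\tanh^2\bigl(z\sqrt{\mu-1/a}\bigr)$ to order $(\mu-1/a)^2$, and rearrange. The only divergence is your worry about a bootstrap: none is needed, since $\bar{x}(\mu,a)\to 0$ as $(\mu,a)\to(1,1)$ already follows from Proposition \ref{properties_phiRS} (as the paper notes just before the statement), so the remainder $O(\bar{x}^4)=\bar{x}^2\cdot O(\bar{x}^2)$ is automatically $\bar{x}^2\,o(\bar{x})$.
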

\begin{proof}
Clearly \eqref{criconseq} holds by Proposition \ref{properties_phiRS}. 
 Using a Taylor expansion  of $\tanh^2 (b+y)$ around  $y=0$ up to order 3 one obtains
$$
\E\, \tanh^2\left(Y(x,\mu,a)\right)=\E\, \tanh^2\left(Y(0,\mu,a)\right)+t(\mu,a)x^2+ g(x,\mu,a)
$$
where $g(x,\mu,a)= \frac{(\mu x)^4}{4!} \E \frac{\partial^4}{\partial y^4} \tanh^2(y)\big|_{y=y(z,\xi,x,\mu,a)} \xi^4$. Since $|\frac{\partial^4}{\partial y^4} \tanh^2(y)|\leq costant$ uniformly on $y$, we have that $g(x,\mu,a)=o(x^3)$. Then one can write
\begin{equation}\label{t2}
\E\, \tanh^2\left(Y(x,\mu,a)\right)=\E\, \tanh^2\left(z\sqrt{\mu-\frac{1}{a}}\right)+t(\mu,a)x^2(1+ o(x))
\end{equation}
The term $\E\, \tanh^2\left(z\sqrt{\mu-\frac{1}{a}}\right)$ can be represented using  Taylor expansion  of $\tanh^2 (y)$ around  $y=0$ up to order 4 obtaining

\begin{equation}\label{t1}
\E\, \tanh^2\left(z\sqrt{\mu-\frac{1}{a}}\right)\,=\,(\mu-\frac{1}{a})-2(\mu-\frac{1}{a})^2(1+o(1)) 
\end{equation}
Combining \eqref{t2} and \eqref{t1} one obtains \eqref{espansione}.

\end{proof} 

The previous  Proposition and in particular the  expansion \eqref{espansione} can be  used to obtain the critical behavior of $\bar{x}(\mu,a)$ as $(\mu,a)\to(1,1)$ with the constraint $\mu a\geq 1$ and $a\geq 1$. As an example fixing $a=1$ one gets

\begin{equation}
\lim_{\mu \to 1^+}\dfrac{ \bar{x}(\mu,1)}{\mu-1}=1.
\end{equation}
Analogously if $\mu=1$ and $a\to 1^+$
\begin{equation}
\lim_{a \to 1^+}\dfrac{ \bar{x}(1,a)}{\sqrt2(1-\frac{1}{a})}=1.
\end{equation}
More generally  one can consider a family of hyperbolas
\begin{equation}
\mu_{\alpha}(a)= \frac{\alpha}{a} +1-\alpha, \;\;\;\;\alpha\leq 1
\end{equation}
and define $x_\alpha(a)=\bar{x}(\mu_{\alpha}(a),a)$. Then expansion \eqref{espansione} leads to
\begin{equation}
\lim_{a\to 1^+} \left(\dfrac{x_{\alpha}(a)}{\frac{1}{a}-1}\right)^2=(\alpha-1)(\alpha-2)\,.
\end{equation}
The critical behavior around $(1,1)$ of the magnetization along the above directions is therefore the same of the optimal setting \cite{us}.

{ 
\section{Conclusions and Outlooks}
In this paper we have shown how to solve, in the finite temperature approach, a matrix rank-one estimation problem in a mismatched setting with a Rademacher prior and studied the paradigmatic case when the signal distribution is Gaussian and factorized. For such case, a complete characterization of the phase space has been given in terms of the two order parameters: the Parisi overlap and Mattis magnetization. Our central result, a nested variational principle over a distribution and a real number, can be extended beyond the Rademacher prior assumption, leading to an SK model with soft spins \cite{pancho_soft} with a Mattis interaction.

{
We emphasize that our variational principle pinpoints the presence of the replica symmetry breaking phase in a mismatched inference problem. This is expected to have implications on the algorithms usually implemented to retrieve signal components, such as Approximate Message Passing (AMP). Indeed we have observed, with preliminary numerical tests, that in the RSB phase of the model with Gaussian signal distribution ten thousand iterations of AMP are not sufficient to reach convergence: the values of the local magnetizations keep oscillating. On the contrary less than a hundred were enough in the RS phase, thus confirming the picture in \figurename\,\ref{phase_diagram}. As predicted by the state evolution analysis \cite{AMP_statevolution} in the RS phases the algorithm is in agreement with the magnetization and overlap given by the consistency equation \eqref{equationforx_RS}. The rigorous characterization of the AMP convergence, that seems to be related to the de Almeida-Thouless line, is left for future work. 
}

It is interesting to notice that the model studied here is equivalent, through a Hubbard-Stratonovi\v{c} transformation as done in \cite{RBM_Hofield,Mezard_Hopfield}, to a Boltzmann Machine with one hidden analogic neuron linked to a visible layer of neurons in mean field disordered interaction, \emph{i.e.} a non-restricted Boltzmann Machine. Our result extends also to a finite number of hidden analogic neurons and leads to a model that includes SK and Hopfield terms. {In this regard we mention that the SK term can indeed be generated starting from the Hopfield model adding a form of \emph{synaptic noise} \cite{Elena2,Elena1} (see eq. (8) in \cite{Elena1} in particular) that blurs the interactions, built from the patterns, precisely as in \eqref{channel}. }

Finally, we point out that the result presented in this work could be re-framed within the Hamilton-Jacobi approach \cite{HJ_Barra_Genovese,Mourrat-Panchenko} obtaining an initial value problem with a concave Hamiltonian and the Parisi pressure as initial condition. Our variational principle would then emerge from the Hopf-Lax formula for the solution to such problem.

\vspace{15pt}
{\small
\noindent\textbf{Acknowledgements:} The authors thank Jean Barbier, Wei-Kuo Chen, Marc Mézard, Dmitry Panchenko and Manuel Sáenz for several fruitful interactions and useful suggestions. Adriano Barra, Francesco Guerra, Jorge Kurchan, Nicolas Macris and Farzad Pourkamali are acknowledged for interesting discussions.
The authors acknowledge support from EU project 952026-Humane-AI-Net and RFO University of Bologna funds.
}}

\printbibliography

@book{contucci_giardina_2012, place={Cambridge},
title={Perspectives on Spin Glasses}, DOI={10.1017/CBO9781139049306},
publisher={Cambridge University Press}, author={Contucci, Pierluigi and Giardinà, Cristian},
year={2012}
}

@misc{mukka2021,
      title={Variational Inference in high-dimensional linear regression}, 
      author={Sumit Mukherjee and Subhabrata Sen},
      year={2021},
      eprint={2104.12232},
      archivePrefix={arXiv},
      primaryClass={math.ST}
}

@misc{macrismismatched,
      title={Mismatched Estimation of rank-one symmetric matrices under Gaussian noise}, 
      author={Farzad Pourkamali and Nicolas Macris},
      year={2021},
      eprint={2107.08927},
      archivePrefix={arXiv},
      primaryClass={cs.IT}
}

@article{RBM_Hofield,
author = {Barra, Adriano and Bernacchia, Alberto and Santucci, Enrica and Contucci, Pierluigi},
year = {2012},
month = {06},
pages = {1-9},
title = {On the equivalence of Hopfield Networks and Boltzmann Machines},
volume = {34},
journal = {Neural networks : the official journal of the International Neural Network Society},
doi = {10.1016/j.neunet.2012.06.003}
}

@article{HJ_Barra_Genovese,
author = {Genovese, Giuseppe and Barra, Adriano},
year = {2008},
month = {12},
pages = {},
title = {A mechanical approach to mean field spin models},
volume = {50},
journal = {Journal of Mathematical Physics},
doi = {10.1063/1.3131687}
}

@article{Mezard_Hopfield,
author = {Mézard, Marc},
year = {2016},
month = {08},
pages = {},
title = {Mean-field message-passing equations in the Hopfield model and its generalizations},
volume = {95},
journal = {Physical Review E},
doi = {10.1103/PhysRevE.95.022117}
}

@article{pancho_soft,
author = {Dmitry Panchenko},
title = {{Free energy in the mixed $p$-spin models with vector spins}},
volume = {46},
journal = {The Annals of Probability},
number = {2},
publisher = {Institute of Mathematical Statistics},
pages = {865 -- 896},
keywords = {$p$-spin interactions, Free energy, Spin glasses, vector spins},
year = {2018},
doi = {10.1214/17-AOP1194},
}

@misc{barbier2020strong,
      title={Strong replica symmetry for high-dimensional disordered log-concave Gibbs measures}, 
      author={Jean Barbier and Dmitry Panchenko and Manuel Sáenz},
      year={2020},
      eprint={2009.12939},
      archivePrefix={arXiv},
      primaryClass={math.PR}
}

@misc{barbier2021performance,
      title={Performance of Bayesian linear regression in a model with mismatch}, 
      author={Jean Barbier and Wei-Kuo Chen and Dmitry Panchenko and Manuel Sáenz},
      year={2021},
      eprint={2107.06936},
      archivePrefix={arXiv},
      primaryClass={math.PR}
}

@ARTICLE{Verdu,
  author={Verdú, Sergio},
  journal={IEEE Transactions on Information Theory}, 
  title={Mismatched Estimation and Relative Entropy}, 
  year={2010},
  volume={56},
  number={8},
  pages={3712-3720},
  doi={10.1109/TIT.2010.2050800}}

@INPROCEEDINGS{barbierCONC2,
  author={Barbier, Jean},
  booktitle={2019 IEEE 8th International Workshop on Computational Advances in Multi-Sensor Adaptive Processing (CAMSAP)}, 
  title={Concentration of the Matrix-Valued Minimum Mean-Square Error in Optimal Bayesian Inference}, 
  year={2019},
  volume={},
  number={},
  pages={644-648},
  doi={10.1109/CAMSAP45676.2019.9022463}}

@article{diff_parisi,
author = {Panchenko, Dmitry},
year = {2007},
month = {10},
pages = {},
title = {On differentiability of the Parisi formula},
volume = {13},
journal = {Electronic Communications in Probability},
doi = {10.1214/ECP.v13-1365}
}

@book{panchenko2015sherrington,
title={The Sherrington-Kirkpatrick Model},
author={Panchenko, Dmitry},
year={2015},
publisher={Springer},
}

@book{Tala_vol1,
title={Mean Field Models for Spin Glasses: Volume I: Basic Examples},
author={Talagrand, Michel},
year={2010},
publisher={Springer}
}

@book{Tala_vol2,
author = {Talagrand, Michel},
year = {2011},
month = {01},
pages = {},
publisher={Springer},
title = {Mean Field Models for Spin Glasses: Volume II: Advanced Replica-Symmetry and Low Temperature},
volume = {55},
isbn = {978-3-642-22252-8},
journal = {Ergebnisse der Mathematik und ihrer Grenzgebiete},
doi = {10.1007/978-3-642-22253-5}
}

@article{Mourrat-Panchenko,
author = {Jean-Christophe Mourrat and Dmitry Panchenko},
title = {{Extending the Parisi formula along a Hamilton-Jacobi equation}},
volume = {25},
journal = {Electronic Journal of Probability},
publisher = {Institute of Mathematical Statistics and Bernoulli Society},
pages = {1 -- 17},
keywords = {Hamilton-Jacobi equation, Parisi formula, Spin glass},
year = {2020},
doi = {10.1214/20-EJP432},

}

@article{adaptive, 
title = {{The adaptive interpolation method: a simple scheme to prove replica formulas in Bayesian inference}}, 
author = {Barbier, Jean and Macris, Nicolas}, 
journal = {Probability Theory and Related Fields}, 
volume = {174}, 
year = {2019}
}

@ARTICLE{Mourrat2,
       author = {{Chen}, Hong-Bin and {Mourrat}, Jean-Christophe and {Xia}, Jiaming},
        title = "{Statistical inference of finite-rank tensors}",
      journal = {arXiv e-prints},
         year = 2021,
        month = apr,
archivePrefix = {arXiv},
       eprint = {2104.05360},
}

@article{Arguin,
author = {Arguin, Louis-Pierre},
year = {2007},
month = {03},
pages = {951-976},
title = {Spin Glass Computations and Ruelle's Probability Cascades},
volume = {126},
journal = {Journal of Statistical Physics},
doi = {10.1007/s10955-006-9207-7}
}

@article{Nishi_id_PC,
author = {Contucci, Pierluigi and Morita, Satoshi and Nishimori, Hidetoshi},
year = {2005},
title = {Surface Terms on the Nishimori Line of the Gaussian Edwards-Anderson Model},
volume = {122},
journal = {Journal of Statistical Physics},
}

@ARTICLE{Reeves,
author={Reeves, Galen},
journal={IEEE Journal on Selected Areas in Information Theory},
title={Information-Theoretic Limits for the Matrix Tensor Product}, 
year={2020},
volume={1},
}

@ARTICLE{contucci_morita_nishimori,
author = {{Morita}, Satoshi and {Nishimori}, Hidetoshi and {Contucci}, Pierluigi},
title = {Griffiths inequalities for the Gaussian spin glass},
journal = {Journal of Physics A Mathematical General},
year = 2004,
volume = {37},
}

@article{MourratHJ_finite_rank,
author = {J.-C. Mourrat},
title = {{Hamilton–Jacobi equations for finite-rank matrix inference}},
volume = {30},
journal = {The Annals of Applied Probability},
number = {5},
publisher = {Institute of Mathematical Statistics},
pages = {2234 -- 2260},
keywords = {Hamilton–Jacobi equation, Spin glass, statistical inference},
year = {2020},
doi = {10.1214/19-AAP1556},

}

@article{Chen_ferromagnetic,
title = "On the mixed even-spin Sherrington-Kirkpatrick model with ferromagnetic interaction",
author = "Chen, {Wei Kuo}",
year = "2014",
volume = "50",
journal = "Annales Institut Henri Poincare",
}

@INPROCEEDINGS{AMP_statevolution,
  author={Rangan, Sundeep and Fletcher, Alyson K.},
  booktitle={2012 IEEE International Symposium on Information Theory Proceedings}, 
  title={Iterative estimation of constrained rank-one matrices in noise}, 
  year={2012},
  volume={},
  number={},
  pages={1246-1250},
  doi={10.1109/ISIT.2012.6283056}}

@book{nishimori01,
  address = {Oxford; New York},
  author = {Nishimori, Hidetoshi},
  keywords = {codes error-correcting-codes physics textbook turbo.codes},
  publisher = {Oxford University Press},
  title = {Statistical Physics of Spin Glasses and Information Processing: an Introduction},
  year = 2001
}

@article{Lenka,
title = {Mutual information for symmetric rank-one matrix estimation: A proof of the replica formula},
author = {Barbier, Jean and Dia, Mohamad and Macris, Nicolas and Krzakala, Florent and Lesieur, Thibault and Zdeborov\'{a}, Lenka},
journal = {Advances in Neural Information Processing Systems},
volume = {29},
year = {2016},
}

@ARTICLE{Jean-lenka2,
author = {{Barbier}, Jean and {Dia}, Mohamad and {Macris}, Nicolas and {Krzakala}, Florent and {Zdeborov{\'a}}, Lenka},
title = "{Rank-one matrix estimation: analysis of algorithmic and information theoretic limits by the spatial coupling method}",
year = 2018,
journal = {arXiv e-prints},
eid = {arXiv:1812.02537}
}

@article{GG_contucci_Giardina,
author = {Contucci, Pierluigi and Giardina, Cristian},
year = {2005},
title = {The Ghirlanda-Guerra Identities},
volume = {126},
journal = {Journal of Statistical Physics},
}

@book{ellis_book_largedev,
title = {Entropy, Large Deviations, and Statistical Mechanics},
author = {Ellis, Richard},
publisher = {Springer},
year =      {2006},
}

@article{ToninelliAT,
  title={About the Almeida-Thouless transition line in the Sherrington-Kirkpatrick mean-field spin glass model},
  author={F. Toninelli},
  journal={EPL},
  year={2002},
  volume={60},
  pages={764-767}
}

@ARTICLE{interp_guerra_2002,
author = {{Guerra}, Francesco and {Toninelli}, Fabio Lucio},
title = "{The Thermodynamic Limit in Mean Field Spin Glass Models}",
journal = {Communications in Mathematical Physics},
year = 2002,
volume = {230},
}

@ARTICLE{ChenATsharp,
       author = {{Chen}, Wei-Kuo},
        title = "{On the Almeida-Thouless transition line in the SK model with centered Gaussian external field}",
      journal = {arXiv e-prints},
     keywords = {Condensed Matter - Disordered Systems and Neural Networks, Mathematical Physics, Mathematics - Probability},
         year = 2021,
        month = mar,
archivePrefix = {arXiv},
       eprint = {2103.04802},
 primaryClass = {cond-mat.dis-nn}
}

@Article{Guerra_upper_bound,
author="Guerra, Francesco",
title="Broken Replica Symmetry Bounds in the Mean Field Spin Glass Model",
journal="Communications in Mathematical Physics",
year="2003",
volume="233",
}

@ARTICLE{us,
author = {{Alberici}, Diego and {Camilli}, Francesco and {Contucci}, Pierluigi and {Mingione}, Emanuele},
title = {The multi-species mean-field spin-glass on the Nishimori line},
journal = {Journal of Statistical Physics},
volume={182},
year = 2020
}

@article{Elena1,
author = {E. Agliari and G. De Marzo},
journal = {The European Physics Journal Plus},
year = {2020},
volume = {135},
pages={883},
title={Tolerance versus synaptic noise in dense associative memories}
}

@article{Elena2,
author = {E. Agliari and F. Alemanno and A. Barra and M. Centonze and A. Fachechi},
journal = {Physical Review Letters},
volume = {124},
number = {2},
pages={028301},
year={2020},
title = {Neural networks with a redundant representation: detecting the undetectable}
}

@ARTICLE{Guerra97GG,
    author = {Francesco Guerra},
    title = {About The Overlap Distribution In Mean Field Spin Glass Models},
    journal = {Int. J. Phys. B},
    year = {1997},
    volume = {10},
    pages = {1675--1684}
}

@book{MPV,
  title={Spin Glass Theory and Beyond},
  author={Mezard, M. and Parisi, G. and Virasoro, M.A.},
  isbn={9789971501150},
  lccn={sls90082516},
  series={Lecture Notes in Physics Series},
  year={1987},
  publisher={World Scientific}
}

@ARTICLE{DBMNL,
       author = {{Alberici}, Diego and {Camilli}, Francesco and {Contucci}, Pierluigi and {Mingione}, Emanuele},
        title = "{The solution of the deep Boltzmann machine on the Nishimori line}",
      journal = {Communications in Mathematical Physics (to appear)},
         year = 2021,
        month = {07},
          doi = {10.1007/s00220-021-04165-0},
}

@article{Bolt_conditioning,
author = {Bolthausen, Erwin},
year = {2012},
month = {01},
pages = {},
title = {An Iterative Construction of Solutions of the TAP Equations for the Sherrington-Kirkpatrick Model},
volume = {325},
journal = {Communications in Mathematical Physics},
doi = {10.1007/s00220-013-1862-3}
}

@ARTICLE{Bolthausen,
       author = {{Bolthausen}, E. and {Sznitman}, A. -S.},
        title = "{On Ruelle's Probability Cascades and an Abstract Cavity Method}",
      journal = {Communications in Mathematical Physics},
         year = 1998,
        month = jan,
       volume = {197},
       number = {2},
        pages = {247-276},
          doi = {10.1007/s002200050450},

}

@article{Florent,
author = {El-alaoui, Ahmed and Krzakala, Florent and Jordan, Michael},
year = {2020},
month = {04},
pages = {863-885},
title = {Fundamental limits of detection in the spiked Wigner model},
volume = {48},
journal = {Annals of Statistics},
doi = {10.1214/19-AOS1826}
}

@article{overlap_jean,
    author = {Barbier, Jean},
    title = "{Overlap matrix concentration in optimal Bayesian inference}",
    journal = {Information and Inference: A Journal of the IMA},
    volume = {10},
    number = {2},
    pages = {597-623},
    year = {2020},
    month = {05},
    issn = {2049-8772},
    doi = {10.1093/imaiai/iaaa008},
}

@ARTICLE{jean-dmitry,
author = {{Barbier}, Jean and {Panchenko}, Dmitry},
title = {Strong replica symmetry in high-dimensional optimal Bayesian inference},
journal = {arXiv e-prints},
year = 2020,
eid = {arXiv:2005.03115},
}

@article{Chatt,
author = {Arguin, Louis-Pierre and Chatterjee, Sourav},
year = {2010},
month = {11},
pages = {},
title = {Random Overlap Structures: Properties and Applications to Spin Glasses},
volume = {156},
journal = {Probability Theory and Related Fields},
doi = {10.1007/s00440-012-0431-6}
}

@article{Tala_parisiformula,
 ISSN = {0003486X},
 doi = {10.4007/annals.2006.163.221},
 abstract = {Using Guerra's interpolation scheme, we compute the free energy of the Sherrington-Kirkpatrick model for spin glasses at any temperature, confirming a celebrated prediction of G. Parisi.},
 author = {Michel Talagrand},
 journal = {Annals of Mathematics},
 number = {1},
 pages = {221--263},
 publisher = {Annals of Mathematics},
 title = {The Parisi Formula},
 volume = {163},
 year = {2006}
}

@article{Chen_uniqueminimizer,
  title={The Parisi Formula has a Unique Minimizer},
  author={A. Auffinger and Wei-Kuo Chen},
  journal={Communications in Mathematical Physics},
  year={2015},
  volume={335},
  pages={1429-1444}
}

@INPROCEEDINGS{multiview,
  author={Barbier, Jean and Reeves, Galen},
  booktitle={2020 IEEE International Symposium on Information Theory (ISIT)}, 
  title={Information-theoretic limits of a multiview low-rank symmetric spiked matrix model}, 
  year={2020},
  volume={},
  number={},
  pages={2771-2776},
  doi={10.1109/ISIT44484.2020.9173970}}

@article{Chen_parisi_Measures,
  title={On properties of Parisi measures},
  author={A. Auffinger and Wei-Kuo Chen},
  journal={Probability Theory and Related Fields},
  year={2015},
  volume={161},
  pages={817–850}
}

@article{Lelarge2017FundamentalLO,
  title={Fundamental limits of symmetric low-rank matrix estimation},
  author={M. Lelarge and L{\'e}o Miolane},
  journal={Probability Theory and Related Fields},
  year={2017},
  volume={173},
  pages={859-929}
}

@article{clement,
  title={High-dimensional rank-one nonsymmetric matrix decomposition: the spherical case},
  author={Cl{\'e}ment Luneau and N. Macris and Jean Barbier},
  journal={2020 IEEE International Symposium on Information Theory (ISIT)},
  year={2020},
  pages={2646-2651}
}

@article{Barbier_2019,
	doi = {10.1088/1751-8121/ab2735},
	url = {https://doi.org/10.1088/1751-8121/ab2735},
	year = 2019,
	month = {jun},
	publisher = {{IOP} Publishing},
	volume = {52},
	number = {29},
	pages = {294002},
	author = {Jean Barbier and Nicolas Macris},
	title = {The adaptive interpolation method for proving replica formulas. Applications to the Curie{\textendash}Weiss and Wigner spike models},
	journal = {Journal of Physics A: Mathematical and Theoretical},

}

@INPROCEEDINGS{allerton,
  author={Barbier, Jean and Macris, Nicolas and Miolane, Léo},
  booktitle={2017 55th Annual Allerton Conference on Communication, Control, and Computing (Allerton)}, 
  title={The layered structure of tensor estimation and its mutual information}, 
  year={2017},
  volume={},
  number={},
  pages={1056-1063},
  doi={10.1109/ALLERTON.2017.8262854}}

@ARTICLE{Korada,
  author={Korada, Satish Babu and Macris, Nicolas},
  journal={IEEE Transactions on Information Theory}, 
  title={Tight Bounds on the Capacity of Binary Input Random CDMA Systems}, 
  year={2010},
  volume={56},
  number={11},
  pages={5590-5613},
  doi={10.1109/TIT.2010.2070131}}

\end{document}